\DeclareMathAlphabet{\mathcal}{OMS}{cmsy}{m}{n}
\newcommand{\eop}{\hspace*{\fill}$\Box$}
\def\qed{\eop}
\newtheorem{definition}{Definition}[section]
\newtheorem{lemma}[definition]{Lemma}
\newtheorem{Copiedlemma}{Lemma}
\newtheorem{Copiedtheorem}{Theorem}
\newtheorem{theorem}[definition]{Theorem}
{\theorembodyfont{\rmfamily}
  \newtheorem{example}[definition]{\it Example}
  \newtheorem{myproof}{\it Proof.}
  \newtheorem{myproofsketch}{\it Proof (Sketch).}
}
\newenvironment{proof}{\begin{myproof}}{\end{myproof}}
\newenvironment{proofsketch}{\begin{myproofsketch}}{\end{myproofsketch}}
\def\Serbanuta{{\c S}erb{\u a}nu{\c t}{\u a}}
\def\Rosu{Ro{\c s}u}
\def\SRTransformation{SR Transformation}
\def\SRtransformation{SR transformation}
\def\cC{\mathcal{C}}
\def\cD{\mathcal{D}}
\def\cF{\mathcal{F}}
\def\cR{\mathcal{R}}
\def\cV{\mathcal{V}}
\def\Root{\mathit{root}}
\def\Arity{\mathit{arity}}
\def\Var{{\mathcal{V}\mathit{ar}}}
\def\Dom{{\mathcal{D}\mathit{om}}}
\def\Range{{\mathcal{R}\mathit{an}}}
\def\Subst{{\mathcal{S}\mathit{ub}}}
\def\Condition#1#2{#1 \tto #2}
\def\MyVec#1#2#3{\overrightarrow{#1_{#2..#3}}}
\def\MapVec#1#2#3#4{\overrightarrow{#1(#2_{#3..#4})}}
\def\Numv#1#2{|#1|_{#2}}
\def\ID{\mathtt{id}}
\def\Extension{\mathtt{ext}}
\def\ol#1{\overline{#1}}
\def\wh#1{\widehat{#1}}
\def\Reset#1{(#1)^\bot}
\def\U{\mathbb{U}}
\def\Usymb{U}
\def\SR{\mathbb{SR}}
\def\cRaux{\cR_{\mathit{aux}}}
\def\symb#1{\mathsf{#1}}
\def\olsymb#1{\ol{\symb{#1}}}
\def\guard#1{\langle #1 \rangle}
\def\T{\mathbb{T}}
\def\Tuple{\symb{tuple}}
\def\Convert{\Phi}
\def\Auxconvert{\Psi}
\newcommand{\phito}[1][\U(\cR)]{\Rightarrow_{\Convert,#1}}
\def\tto{\twoheadrightarrow}
\def\cRqsort{\cR_1}
\def\cRsub{\cR_2}
\def\cRqsortaux{\cR_3}
\def\cRidea{\cR_4}
\def\Append{\mathop{+\!\!\!\!+}}
\def\UsplitA{\symb{u}_1}
\def\EvalsplitA#1{[#1]_1}
\def\UsplitB{\symb{u}_2}
\def\EvalsplitB#1{[#1]_2}
\def\UsplitC{\symb{u}_3}
\def\EvalsplitC#1{[#1]_3}
\def\UsplitD{\symb{u}_4}
\def\EvalsplitD#1{[#1]_4}
\def\Uqsort{\symb{u}_5}
\def\Evalqsort#1{[#1]_5}
\def\UfC{\symb{u}_6}
\def\UfD{\symb{u}_7}
\def\UfC{\symb{u}_6}
\newcommand{\arSR}[1][d]{\ar[#1]^(.6){\SR(\cRidea)}}
\newcommand{\arU}[1][d]{\ar[#1]^(.6){\U(\cRidea)}}
\newcommand{\arUl}[1][d]{\ar[#1]_(.6){\U(\cRidea)}}
\newcommand{\arUs}[1][d]{\ar[#1]_{+\hspace{-.5ex}}^(.7){\U(\cRidea)}}
\newcommand{\Enc}[2][]{\save [].[#1]!C="Enc#2"*+[F.:<3pt>]\frm{} \restore}
\title{Sound Structure-Preserving Transformation 
for Weakly-Left-Linear Deterministic Conditional Term Rewriting Systems}
\author{
Ryota Nakayama
\qquad\qquad
Naoki Nishida
\qquad\qquad
Masahiko Sakai
\institute{Graduate School of Information Science\\
Nagoya University\\
Nagoya, Japan}
\email{\mbox{\{}nakayama@trs.cm.\mbox{, }nishida@\mbox{, }sakai@\mbox{\}}is.nagoya-u.ac.jp}
}
\begin{document}
\maketitle

\begin{abstract}
In this paper, we show that the {\SRtransformation}, a computationally
equivalent transformation proposed by {\Serbanuta} and {\Rosu}, is a
sound structure-preserving transformation for weakly-left-linear
deterministic conditional term rewriting systems. 
More precisely, we show that every weakly-left-linear deterministic
conditional term rewriting system can be converted to an equivalent
weakly-left-linear and ultra-weakly-left-linear deterministic
conditional term rewriting system and prove that the {\SRtransformation}
is sound for weakly-left-linear and ultra-weakly-left-linear
deterministic conditional term rewriting systems. 
Here, soundness for a conditional term rewriting system means that
reduction of the transformed unconditional term rewriting system
creates no undesired reduction sequence for the conditional system.
\end{abstract}

\section{Introduction}
\label{sec:intro}

\emph{Conditional term rewriting} is known to be much more complicated
than unconditional term rewriting in the sense of analyzing properties,
e.g., \emph{operational termination}~\cite{LMM05},
\emph{confluence}~\cite{SMI95}, and \emph{reachability}~\cite{FG03}.
A popular approach to the analysis of conditional term rewriting systems
(CTRS) is to transform a CTRS into an unconditional term rewriting
system (TRS) that is in general an overapproximation of the CTRS in
terms of reduction. 
Such an approach enables us to use techniques for the analysis of TRSs,
which have been well investigated in the literature.
For example, if the transformed TRS is terminating, then the CTRS is
operationally terminating~\cite{DLMMU04}---to prove termination of the
transformed TRS, we can use many termination proving techniques that
have been well investigated for TRSs (cf.~\cite{Ohl02}).

There are two approaches to transformations of CTRSs into TRSs:
\emph{unravelings}~\cite{Mar96,Mar97} proposed by Marchiori (see,
e.g.,~\cite{GG08,NSS12lmcs}), and a transformation~\cite{Vir99} proposed
by Viry (see, e.g.,~\cite{SR06,GG08}).

Unravelings are transformations from a CTRS into a TRS over an extension
of the original signature for the CTRS, which are \emph{complete} for
(reduction of) the CTRS~\cite{Mar96}.
Here, completeness for a CTRS means that for every reduction sequence of
the CTRS, there exists a corresponding reduction sequence of the
unraveled TRS.
In this respect, the unraveled TRS is an overapproximation of the CTRS
w.r.t.\ reduction, and is useful for analyzing the properties of the
CTRS, such as syntactic properties, modularity, and operational
termination, since TRSs are in general much easier to handle than CTRSs.

The latest transformation based on Viry's approach is a
\emph{computationally equivalent} transformation proposed by
{\Serbanuta} and {\Rosu}~\cite{SR06,SR06b} (the {\SRtransformation}, for
short), which is one of \emph{structure-preserving}
transformations~\cite{GN14wpte}.
This transformation has been proposed for normal CTRSs
in~\cite{SR06}---started with this class to simplify the
discussion---and then been extended to \emph{strongly or syntactically
deterministic} CTRSs (SDCTRSs) that are \emph{ultra-left-linear}
(\emph{semilinear}~\cite{SR06b}).
Here, for a syntactic property \textit{P}, a CTRS is said to be
\emph{ultra-\textit{P}} if its unraveled TRS via Ohlebusch's
unraveling~\cite{Ohl01} has the property \textit{P}.
The {\SRtransformation} converts 
a confluent, operationally terminating, and ultra-left-linear SDCTRS
into a TRS that is \emph{computationally equivalent} to the CTRS. 
This means that such a converted TRS can be used to exactly simulate any
reduction sequence of the original CTRS to a normal form.

As for unravelings, soundness of the {\SRtransformation} plays a very
important role for, e.g., computational equivalence. 
Here, soundness for a CTRS means that reduction of the converted TRS
creates no undesired reduction sequences for the CTRS.
Neither any unraveling nor the {\SRtransformation} is sound for all
CTRSs.
Since soundness is one of the most important properties for
transformations of CTRSs, sufficient conditions for soundness have been
well investigated, especially for unravelings (see,
e.g.,~\cite{GGS10,NSS12lmcs,GGS12}).
For example, the \emph{simultaneous unraveling} that has been proposed
by Marchiori~\cite{Mar96} (and then has been improved by
Ohlebusch~\cite{Ohl01}) is sound for \emph{weakly-left-linear} (WLL, for
short), \emph{confluent}, \emph{non-erasing}, or \emph{ground
conditional} normal CTRSs~\cite{GGS10}, and for DCTRSs that are
\emph{confluent and right-stable}, \emph{WLL}, or
\emph{ultra-right-linear}~\cite{GGS12}.
Normal CTRSs admit a rewrite rule to have conditions to test terms
received via variables in the left-hand side, e.g., whether a term with
such variables can reach a ground normal form specified by the rule. 
This means that we can add so-called \emph{guard} conditions to rewrite rules.
In addition to such a function, DCTRSs admit a rewrite rule to have
so-called \texttt{let}-structures in functional languages. 
On the other hand, the WLL property allows CTRSs to have rules, e.g.,
$\symb{eq}(x,x) \to \symb{true}$, to test equivalence between terms via
non-linear variables. 
For these reasons, the class of WLL DCTRSs is one of the most
interesting and practical classes of CTRSs, as well as that of WLL
normal CTRSs.

The main purpose of transformations along the Viry's approach is to use
the soundly transformed TRS in order to simulate the reduction of the
original CTRS.
The experimental results in~\cite{SR06} indicate that the rewriting
engine using the soundly transformed TRS is much more efficient than the
one using the original left-linear normal CTRS. 
To get an efficient rewriting engine for CTRSs, soundness conditions for
the {\SRtransformation} are worth investigating.

In the case of DCTRSs that are not normal CTRSs, the {\SRtransformation}
is defined for \emph{ultra-left-linear} SDCTRSs, and has been shown to
be sound for such SDCTRSs~\cite{SR06b}.
On the other hand, unlike unravelings, soundness conditions for the
{\SRtransformation} have been investigated \emph{only} for normal
CTRSs~\cite{SR06,SR06b,NYG14wpte}.
For example, it has been shown in~\cite{NYG14wpte} that the
{\SRtransformation} is sound for WLL normal CTRSs, but the result has
not been adapted to WLL SDCTRSs yet. 

In this paper, we show that the {\SRtransformation} is a sound
structure-preserving transformation for WLL DCTRSs that do not have to
be SDCTRSs.
To this end, we first show that every WLL DCTRSs can be converted to a
WLL and ultra-WLL DCTRS such that the reductions of these DCTRSs are the
same.
Then, we show that the {\SRtransformation} is applicable to ultra-WLL
DCTRSs without any change.
Finally, we prove that the {\SRtransformation} is sound for WLL and
ultra-WLL DCTRSs.
These results imply that the composition of the conversion to ultra-WLL
DCTRSs and the {\SRtransformation} is a sound structure-preserving
transformation for WLL DCTRSs.

The contribution of this paper is summarized as follows.
We adapt the result on soundness of the {\SRtransformation} for WLL
normal CTRSs to WLL \emph{deterministic} CTRSs.
The result in this paper covers the result in~\cite{NYG14wpte} for WLL
normal CTRSs showing a simpler proof that would be helpful for further
development of the {\SRtransformation} and its soundness.

This paper is organized as follows. 
In Section~\ref{sec:preliminaries}, we briefly recall basic notions and
notations of term rewriting.
In Section~\ref{sec:transformations}, we recall the notion of soundness,
the simultaneous unraveling, and the {\SRtransformation} for DCTRSs, and
show that every WLL DCTRS can be converted to an equivalent WLL and
ultra-WLL DCTRS.
In Section~\ref{sec:relative-soundness}, we show that the
{\SRtransformation} is sound for WLL and ultra-WLL DCTRSs. 
In Section~\ref{sec:conclusion}, we conclude this paper and describe
future work on this research. 
Some missing proofs are available at
\url{http://www.trs.cm.is.nagoya-u.ac.jp/~nishida/wpte16/}.

\section{Preliminaries}
\label{sec:preliminaries}

In this section, we recall basic notions and notations of term
rewriting~\cite{BN98,Ohl02}. 

Throughout the paper, we use $\cV$ as a countably infinite set of
\emph{variables}. 
Let $\cF$ be a \emph{signature}, a finite set of \emph{function symbols}
each of which has its own fixed arity, and $\Arity_{\cF}(\symb{f})$ be
the arity of function symbol $\symb{f}$. 
We often write $\symb{f}/n \in \cF$ instead of ``$\symb{f} \in \cF$ and
$\Arity_\cF(\symb{f}) = n$'', ``$\symb{f} \in \cF$ such that
$\Arity_\cF(\symb{f})=n$'', and so on.
The set of \emph{terms} over $\cF$ and $V$ ($\subseteq \cV$) is denoted
by $T(\cF,V)$, and the set of variables appearing in any of the terms
$t_1,\ldots,t_n$ is denoted by $\Var(t_1,\ldots,t_n)$.
The number of occurrences of a variable $x$ in a term sequence
$t_1,\ldots,t_n$ is denoted by $\Numv{t_1,\ldots,t_n}{x}$. 
A term $t$ is called \emph{ground} if $\Var(t) = \emptyset$.
A term is called \emph{linear} if any variable occurs in the term at
most once, and called \emph{linear w.r.t.\ a variable} if the variable
appears at most once in $t$. 
For a term $t$ and a position $p$ of $t$, the \emph{subterm of $t$ at
$p$} is denoted by $t|_p$.
The function symbol at the \emph{root} position $\varepsilon$ of term
$t$ is denoted by $\Root(t)$.
Given an $n$-hole \emph{context} $C[~]$ with parallel positions
$p_1,\ldots,p_n$, the notation $C[t_1,\ldots,t_n]_{p_1,\ldots,p_n}$ 
represents the term obtained by replacing hole $\Box$ at position $p_i$
with term $t_i$ for all $1 \leq i \leq n$. 
We may omit the subscript ``$p_1,\ldots,p_n$'' from
$C[\ldots]_{p_1,\ldots,p_n}$. 
For positions $p$ and $p'$ of a term, we write $p' \geq p$ if $p$ is a
prefix of $p'$ (i.e., there exists a sequence $q$ such that $pq = p'$).
Moreover, we write $p' > p$ if $p$ is a proper prefix of $p'$.

A \emph{substitution} $\sigma$ is a mapping from variables to terms such
that the number of variables $x$ with $\sigma(x)\ne x$ is finite, and is
naturally extended over terms.
The \emph{domain} and \emph{range} of $\sigma$ are denoted by
$\Dom(\sigma)$ and $\Range(\sigma)$, respectively.
We may denote $\sigma$ by $\{ x_1 \mapsto t_1, ~ \ldots, ~ x_n \mapsto
t_n \}$ if $\Dom(\sigma) = \{x_1,\ldots,x_n\}$ and $\sigma(x_i) = t_i$
for all $1 \leq i \leq n$.
For $\cF$ and $V$ ($\subseteq \cV$), the set of \emph{substitutions}
that range over $\cF$ and $V$ is denoted by $\Subst(\cF,V)$: 
$\Subst(\cF,V) = \{ \sigma \mid \Range(\sigma) \subseteq T(\cF,V) \}$.
For a substitution $\sigma$ and a term $t$, the application $\sigma(t)$
of $\sigma$ to $t$ is abbreviated to $t\sigma$, and $t\sigma$ is called
an \emph{instance} of $t$.
Given a set $X$ of variables, $\sigma|_X$ denotes the \emph{restricted}
substitution of $\sigma$ w.r.t.\ $X$:
$\sigma|_X = \{ x \mapsto x\sigma \mid x \in \Dom(\sigma) \cap X\}$.

An (oriented) \emph{conditional rewrite rule} over a signature $\cF$ is
a triple $(l,r,c)$, denoted by $l \to r \Leftarrow c$, such that the
\emph{left-hand side} $l$ is a non-variable term in $T(\cF,\cV)$, the
\emph{right-hand side} $r$ is a term in $T(\cF,\cV)$, and the
\emph{conditional part} $c$ is a sequence $\Condition{s_1}{t_1}, \ldots,
\Condition{s_k}{t_k}$ of term pairs ($k \geq 0$) where all of
$s_1,t_1,\ldots,s_k,t_k$ are terms in $T(\cF,\cV)$. 
In particular, a conditional rewrite rule is called \emph{unconditional}
if the conditional part is the empty sequence (i.e., $k = 0$), and we
may abbreviate it to $l \to r$. 
We sometimes attach a unique label $\rho$ to the conditional rewrite
rule $l \to r \Leftarrow c$ by denoting $\rho: l \to r \Leftarrow c$,
and we use the label to refer to the rewrite rule. 

An \emph{(oriented) conditional term rewriting system} (CTRS) over a
signature $\cF$ is a set of conditional rewrite rules over $\cF$.
A CTRS is called an (unconditional) \emph{term rewriting system} (TRS)
if every rule $l \to r \Leftarrow c$ in the CTRS is unconditional and
satisfies $\Var(l) \supseteq \Var(r)$.  
The \emph{reduction relation} $\to_\cR$ of a CTRS $\cR$ is defined as
${\to_{\cR}} = {\bigcup_{n \geq 0} \to_{(n),\cR}}$, where
${\to_{(0),\cR}} = \emptyset$, and ${\to_{(i+1),\cR}} = \{
(C[l\sigma]_p,C[r\sigma]_p) \mid \rho: l \to r \Leftarrow
\Condition{s_1}{t_1}, \ldots, \Condition{s_k}{t_k} \in \cR, ~ 
s_1\sigma \mathrel{\to^*_{(i),\cR}} t_1\sigma, ~\ldots,~ s_k\sigma
\mathrel{\to^*_{(i),\cR}} t_k\sigma \}$ for $i \geq 0$.  
To specify the applied rule $\rho$ and the position $p$ where $\rho$ is
applied, we may write $\to_{p,\rho}$ or $\to_{p,\cR}$ instead of
$\to_\cR$.
Moreover, we may write $\to_{>\varepsilon,\cR}$ instead of $\to_{p,\cR}$
if $p > \varepsilon$. 
The \emph{underlying unconditional system} $\{ l \to r \mid l \to r
\Leftarrow c \in \cR \}$ of $\cR$ is denoted by $\cR_u$.
A term $t$ is called a \emph{normal form} (of $\cR$) if $t$ is
irreducible w.r.t.\ $\cR$. 
For a CTRS $\cR$, a substitution $\sigma$ is called \emph{normalized}
(w.r.t.\ $\cR$) if $x\sigma$ is a normal form w.r.t.\ $\cR$ for every
variable $x \in \Dom(\sigma)$. 
A term $t$ is called \emph{strongly irreducible} (w.r.t.\ $\cR$) if
$t\sigma$ is a normal form w.r.t.\ $\cR$ for every normalized
substitution $\sigma$. 
The sets of \emph{defined symbols} and \emph{constructors} of $\cR$ are
denoted by $\cD_\cR$ and $\cC_\cR$, respectively: 
$\cD_\cR = \{ \Root(l) \mid l \to r \Leftarrow c \in \cR \}$ and
$\cC_\cR = \cF \setminus \cD_\cR$. 
Terms in $T(\cC_\cR,\cV)$ are called \emph{constructor terms of $\cR$}.
$\cR$ is called a \emph{constructor system} if for every rule $l \to r
\Leftarrow c$ in $\cR$, all proper subterms of the $l$ are constructor
terms of $\cR$. 
A CTRS is called \emph{operationally terminating} if there is no
infinite well-formed trees in a certain logical inference
system~\cite{LMM05}.

A conditional rewrite rule $l \to r \Leftarrow c$ is called
\emph{left-linear} (LL) if $l$ is linear, \emph{right-linear} (RL) if
$r$ is linear, \emph{non-erasing} (NE) if $\Var(l) \subseteq \Var(r)$,
and \emph{ground conditional} if $c$ contains no variable.
A conditional rewrite rule $\rho: l \to r \Leftarrow
\Condition{s_1}{t_1}, \ldots, \Condition{s_k}{t_k}$ is called
\emph{weakly-left-linear} (WLL)~\cite{GGS12} if
$\Numv{l,t_1,\dots,t_k}{x} = 1$ for any variable $x \in
\Var(r,s_1,\dots,s_k)$. 
For a syntactic property \textit{P} of conditional rewrite rules, we say
that a CTRS has the property \textit{P} if all of its rules have the
property \textit{P}, e.g., a CTRS is called \emph{LL} if all of its
rules are LL. 
Note that not all LL CTRSs are WLL, e.g., $\symb{f}(x) \to x \Leftarrow
\Condition{\symb{g}(x)}{x}$ is LL but not WLL. 

A conditional rewrite rule $\rho: l \to r \Leftarrow
\Condition{s_1}{t_1}, \ldots, \Condition{s_k}{t_k}$ is called
\emph{deterministic} if $\Var(s_i) \subseteq \Var(l,t_1,\ldots,t_{i-1})$
for all $1 \leq i \leq k$, called \emph{strongly deterministic} if every
term $t_i$ is strongly irreducible w.r.t.\ $\cR$, and called
\emph{syntactically deterministic} if every $t_i$ is a constructor term
or a ground normal form of $\cR_u$. 
We simply call a deterministic CTRS a \emph{DCTRS}, and call a  strongly
or syntactically deterministic CTRS an \emph{SDCTRS}. 
In addition, $\rho$ is classified according to the distribution of
variables in $\rho$ as follows:
\emph{Type 1} if $\Var(r,s_1,t_1,\ldots,s_k,t_k) \subseteq \Var(l)$;
\emph{Type 2} if $\Var(r) \subseteq \Var(l)$;
\emph{Type 3} if $\Var(r) \subseteq \Var(l,s_1,t_1,\ldots,s_k,t_k)$;
\emph{Type 4} otherwise. 
A (D)CTRS is called an \emph{{\it i}-(D)CTRS} if all of its rules are
of Type {\it i}.
A DCTRS $\cR$ is called \emph{normal} (or a \emph{normal CTRS}) if, for
every rule $l \to r \Leftarrow \Condition{s_1}{t_1},\ldots,
\Condition{s_k}{t_k}\in \cR$, all of $t_1,\ldots,t_k$ are ground normal
forms w.r.t.\ $\cR_u$.
In this paper, we only consider 3-DCTRSs.

We often denote a term sequence $t_i,t_{i+1},\ldots,t_{j}$ by
$\MyVec{t}{i}{j}$.
Moreover, for the application of a mapping $\tau$ to $\MyVec{t}{i}{j}$,
we denote the sequence $\tau(t_i),\ldots,\tau(t_j)$ by
$\MapVec{\tau}{t}{i}{j}$, e.g., for a substitution $\theta$, we denote
$t_i\theta,\ldots,t_j\theta$ by $\MapVec{\theta}{t}{i}{j}$. 
For a finite set $X = \{o_1,o_2,\ldots,o_n\}$ of objects, a sequence
$o_1,o_2,\ldots,o_n$ under some arbitrary but fixed order on the objects
is denoted by $\overrightarrow{X}$, and given a mapping $\tau$, the
sequence $\tau(o_1),\tau(o_2),\ldots,\tau(o_n)$ is denoted by
$\tau(\overrightarrow{X})$. 
Given an object $o$, we denote the sequence $\overbrace{o,\ldots,o\,}^{n}$
by $o^n$.

\section{Transformations from DCTRSs into TRSs}
\label{sec:transformations}

In this section, we first recall \emph{soundness} and \emph{completeness} of
transformations, the \emph{simultaneous unraveling}~\cite{Ohl02}, and
the {\SRtransformation}~\cite{SR06} for DCTRSs.
Then, we show that every WLL
DCTRS can be converted to an equivalent WLL and ultra-WLL DCTRS. 
In the following, we use the terminology ``conditional'' for a rewrite
rule that has at least one condition, and distinguish ``conditional
rules'' and ``unconditional rules''. 

\subsection{Soundness and Completeness between Two Rewriting Systems}

We first show a general notion of soundness and completeness between two
(C)TRSs (see~\cite{GG08,NSS12lmcs}). 
We usually consider that one is obtained by transforming the other. 
Let $\cR_1$ and $\cR_2$ be (C)TRSs over signature $\cF_1$ and $\cF_2$,
respectively, $\phi$ be an \emph{initialization} (total) mapping from
$T(\cF_1,\cV)$ to $T(\cF_2,\cV)$, and $\psi$ be a partial inverse of
$\phi$, a so-called \emph{backtranslation} mapping from $T(\cF_2,\cV)$
to $T(\cF_1,\cV)$ such that $\psi(\phi(t_1)) = t_1$ for any term $t_1
\in T(\cF_1,\cV)$. 
We say that
 \begin{itemize}
  \item $\cR_2$ is \emph{sound for} (\emph{reduction of}) \emph{$\cR_1$
	w.r.t.\ $(\phi,\psi)$} if, for any term $t_1 \in T(\cF_1,\cV)$
	and for any term $t_2 \in T(\cF_2,\cV)$, $\phi(t_1)
	\mathrel{\to^*_{\cR_2}} t_2$ implies $t_1
	\mathrel{\to^*_{\cR_1}} \psi(t_2)$ whenever $\psi(t_2)$ is
	defined, 
	and 
  \item $\cR_2$ is \emph{complete for} (\emph{reduction of})
	\emph{$\cR_1$ w.r.t.\ $\phi$} if for all terms $t_1$ and $t'_1$
	in $T(\cF_1,\cV)$, $t_1 \mathrel{\to^*_{\cR_1}} t'_1$ implies
	$\phi(t_1) \mathrel{\to^*_{\cR_2}} \phi(t'_1)$.
 \end{itemize}
We now suppose that $\cR_1$ is a CTRS and $\cR_2$ is a TRS. 
$\cR_2$ is called \emph{computationally equivalent to $\cR_1$} if for
every $\cR_1$-operationally-terminating term $t$ in $T(\cF_1,\cV)$ with
a unique normal form $u$ (i.e., $t \mathrel{\to^*_{\cR_1}} u$), the term
$\phi(t)$ is terminating w.r.t.\ $\cR_2$ and all the normal forms of
$\phi(t)$ w.r.t.\ $\cR_2$ are translated by $\psi$ to $u$.
Note that if $\cR_1$ is operationally terminating, $\cR_2$ is
confluent, terminating, and sound for $\cR_1$ w.r.t.\ $(\phi,\psi)$, and
$\psi$ is defined for all normal forms $t$ such that $\phi(s)
\mathrel{\to^*_{\cR_2}} t$ for some $s \in T(\cF_1,\cV)$, then $\cR_2$
is computationally equivalent to $\cR_1$.

\subsection{Simultaneous Unraveling}

A transformation $U$ of CTRSs into TRSs is called an \emph{unraveling}
if for every CTRS $\cR$, we have that ${\to_\cR} \subseteq
{\to^*_{U(\cR)}}$ and $U(\cR \cup \cR')=U(\cR) \cup \cR'$ whenever
$\cR'$ is a TRS~\cite{Mar96,NSS12lmcs}.
The \emph{simultaneous unraveling} for DCTRSs has been defined
in~\cite{Mar97}, and then has been refined by Ohlebusch~\cite{Ohl01} as
follows. 
\begin{definition}[$\U$~\cite{Ohl02}]
 \label{def:U}
 Let $\cR$ be a DCTRS over a signature $\cF$.
 For each conditional rule $\rho: l \to r \Leftarrow
 \Condition{s_1}{t_1}, \ldots, \Condition{s_k}{t_k}$ in $\cR$, we
 introduce $k$ new function symbols
 $\Usymb^\rho_1,\ldots,\Usymb^\rho_k$, and transform $\rho$ into a set
 of $k+1$ unconditional rules as follows:
 \[
 \U(\rho)
 =
 \{
 ~~~~
 l \to \Usymb^\rho_1(s_1,\overrightarrow{X_1}),
 ~~~~
 \Usymb^\rho_1(t_1,\overrightarrow{X_1}) \to \Usymb^\rho_2(s_2,\overrightarrow{X_2}), 
 ~~~~
 \dots ,
 ~~~~
 \Usymb^\rho_k(t_k,\overrightarrow{X_k}) \to r
 ~~~~
 \}
 \]
 where $X_i = \Var(l,\MyVec{t}{1}{i-1})$ for $1\leq i \leq k$. 
  We define $\U$ for an unconditional rule $l \to r \in \cR$ as $\U(l
 \to r) = \{ l \to r \}$.
 $\U$ is straightforwardly extended to DCTRSs:
 $\U(\cR) = {\bigcup_{\rho \in \cR} \U(\rho)}$.
 We abuse $\U$ to represent the extended signature of $\cF$:
 $\U_\cR(\cF) = \cF \cup \{ \Usymb^\rho_i \mid \rho: l \to r \Leftarrow
 \Condition{s_1}{t_1}, \ldots, \Condition{s_k}{t_k} \in \cR, ~ 1\leq i
 \leq k \}$.
 We say that $\U$ (and also $\U(\cR)$) is \emph{sound for $\cR$} if\/
 $\U(\cR)$ is sound for $\cR$ w.r.t.\ $(\ID_\cF,\ID_{\U_\cR(\cF)})$,
 where $\ID_\cF$ is the identity mapping for $T(\cF,\cV)$, and
 $\ID_{\U_\cR(\cF)}$ is the partial identity mapping for
 $T(\U_\cR(\cF),\cV)$, i.e., $\ID_\cF(t)=\ID_{\U_\cR(\cF)}(t)=t$ for $t
 \in T(\cF,\cV)$ and $\ID_{\U_\cR(\cF)}(t)$ is undefined for $t \in
 T(\U_\cR(\cF),\cV)\setminus T(\cF,\cV)$. 
 We also say that $\U$ (and also $\U(\cR)$) is \emph{complete for $\cR$}
 if\/ $\U(\cR)$ is complete for $\cR$ w.r.t.\ $\ID_\cF$.
\end{definition}
Note that $\U(\cR)$ is a TRS over $\U_\cR(\cF)$, i.e., $\U$ transforms a
DCTRS into a TRS. 
In examples below, we use $\symb{u}_1, \symb{u}_2,\ldots$ for fresh U
symbols introduced during the application of $\U$.

\begin{example}
 \label{ex:qsort}
 Consider the following DCTRS from~\cite[Section~7.2.5]{Ohl02}:
 \[
 \cRqsort = 
 \left\{
 \begin{array}{r@{\>}c@{\>}l}
  \symb{split}(x,\symb{nil}) & \to & \symb{pair}(\symb{nil},\symb{nil}), \\
  \symb{split}(x,\symb{cons}(y,ys)) & \to & \symb{pair}(xs,\symb{cons}(y,zs))
   \Leftarrow \Condition{\symb{split}(x,ys)}{\symb{pair}(xs,zs)},~
   \Condition{x \leq y}{\symb{true}}, \\
  \symb{split}(x,\symb{cons}(y,ys)) & \to & \symb{pair}(\symb{cons}(y,xs),zs)
   \Leftarrow \Condition{\symb{split}(x,ys)}{\symb{pair}(xs,zs)},~
   \Condition{x \leq y}{\symb{false}}, \\
  \symb{qsort}(\symb{nil}) & \to & \symb{nil}, \\
  \symb{qsort}(\symb{cons}(x,xs)) & \to & \symb{qsort}(ys) \Append \symb{cons}(x,\symb{qsort}(zs))
   \Leftarrow \Condition{\symb{split}(x,xs)}{\symb{pair}(ys,zs)} \\
 \end{array}
 \right\}
 \cup \cRsub
 \]
 where 
 \[
 \cRsub = 
 \left\{
 \begin{array}{r@{\>}c@{\>}l@{~~~~~}r@{\>}c@{\>}l@{~~~~~}r@{\>}c@{\>}l}
  \symb{0} \leq y & \to & \symb{true}, &
   \symb{s}(x) \leq \symb{0} & \to & \symb{false}, &
   \symb{s}(x) \leq \symb{s}(y) & \to & x \leq y, \\
  \symb{nil} \Append ys & \to & ys, &
   \symb{cons}(x,xs) \Append ys & \to & \symb{cons}(x, xs \Append ys) \\
 \end{array}
 \right\}
 \]
 Introducing U symbols $\UsplitA$, $\UsplitB$, $\UsplitC$, $\UsplitD$,
 and $\Uqsort$ for conditional rules in $\cRqsort$, $\cRqsort$ is
 unraveled by $\U$ as follows:
 \[
 \U(\cRqsort) = 
 \left\{
 \begin{array}{r@{\>}c@{\>}l}
  \symb{split}(x,\symb{nil}) & \to & \symb{pair}(\symb{nil},\symb{nil}), \\
  
    \symb{split}(x,\symb{cons}(y,ys)) & \to & 
    \UsplitA(\symb{split}(x,ys),x,y,ys), \\
  \UsplitA(\symb{pair}(xs,zs),x,y,ys) & \to & \UsplitB(x \leq y,x,y,ys,xs,zs), \\
  \UsplitB(\symb{true},x,y,ys,xs,zs) & \to & \symb{pair}(xs,\symb{cons}(y,zs)), \\
  
    \symb{split}(x,\symb{cons}(y,ys)) & \to & 
    \UsplitC(\symb{split}(x,ys),x,y,ys), \\
  \UsplitC(\symb{pair}(xs,zs),x,y,ys) & \to & \UsplitD(x \leq y,x,y,ys,xs,zs), \\
  \UsplitD(\symb{false},x,y,ys,xs,zs) & \to & \symb{pair}(\symb{cons}(y,xs),zs), \\
  
    \symb{qsort}(\symb{nil}) & \to & \symb{nil}, \\
  
    \symb{qsort}(\symb{cons}(x,xs)) & \to & 
    \Uqsort(\symb{split}(x,xs),x,xs), \\
  \Uqsort(\symb{pair}(ys,zs),x,xs) & \to & \symb{qsort}(ys) \Append \symb{cons}(x,\symb{qsort}(zs)) \\
 \end{array}
 \right\}
 \cup \cRsub
 \]
\end{example}

As shown in~\cite{Mar96,GGS12}, $\U$ is not sound for all DCTRSs, while
$\U$ is sound for some classes of DCTRSs, e.g., ``confluent and
right-stable'', ``WLL'', and ``RL'' (cf.~\cite{GGS12}).
\begin{theorem}[\cite{GGS12}]
 \label{th:U-soundness_for_WLL}
 $\U$ is sound for WLL DCTRSs.
\end{theorem}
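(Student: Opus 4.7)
The plan is to prove soundness by strong induction on the length $n$ of the $\U(\cR)$-reduction $\phi(t_1) = t_1 \to^*_{\U(\cR)} t_2$, under the assumption that $\psi(t_2)$ is defined, i.e.\ $t_2 \in T(\cF,\cV)$. The base case $n=0$ is trivial. For the inductive step I would locate a complete $\Usymb$-cycle in the reduction: a position $p$ and a conditional rule $\rho: l \to r \Leftarrow \Condition{s_1}{t_1},\ldots,\Condition{s_k}{t_k}$ of $\cR$ such that the first unraveled rule of $\rho$ is applied at $p$ introducing $\Usymb^\rho_1$, each intermediate unraveled rule of $\rho$ is applied at $p$ in order, and finally $\Usymb^\rho_k(t_k,\overrightarrow{X_k}) \to r$ is applied at $p$ eliminating the $\Usymb$-symbol. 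Such a cycle must exist whenever the reduction actually uses some conditional rule, because $t_2$ contains no $\Usymb$-symbols.

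For one such cycle let $\sigma_0,\sigma_1,\ldots,\sigma_k$ be the matching substitutions used at the $k+1$ transitions (so $\sigma_0$ matches $l$ initially and $\sigma_k$ matches $\Usymb^\rho_k(t_k,\overrightarrow{X_k})$ at the end, producing $r\sigma_k$). Between consecutive transitions the first argument $s_i\sigma_{i-1}$ of $\Usymb^\rho_i$ reduces in $\U(\cR)$ to $t_i\sigma_i$, and the auxiliary tuple $\overrightarrow{X_i}\sigma_{i-1}$ reduces to $\overrightarrow{X_i}\sigma_i$. Each of these sub-reductions is strictly shorter than the original, so by the induction hypothesis each can be reproduced inside $\cR$ once a single coherent substitution is available.

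The crux of the proof, and its main obstacle, is to synthesize a single substitution $\sigma$ such that $s_i\sigma \to^*_\cR t_i\sigma$ for every $i$ and the single $\cR$-step $l\sigma \to_{p,\rho} r\sigma$ reproduces the net effect of the whole cycle. WLL is exactly the hypothesis that allows this: any variable occurring in $r$ or in some $s_j$ appears at most once in $l,t_1,\ldots,t_k$, so such a \emph{relevant} variable receives a single reduct across the $\sigma_i$'s, whereas variables duplicated inside $l$ or the $t_i$'s never appear in $r$ or any $s_j$ and hence their potentially diverging reducts are irrelevant to the $\cR$-step to be recovered. I would define $\sigma$ on each variable by taking its reduct at the latest $\sigma_i$ that constrains it, use WLL to argue that each sub-reduction $s_i\sigma_{i-1} \to^*_{\U(\cR)} t_i\sigma_i$ can be uniformly replayed with $\sigma$ in place of both $\sigma_{i-1}$ and $\sigma_i$, and then invoke the induction hypothesis on these (shorter) sub-reductions to turn them into $\cR$-reductions. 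This licenses the single $\cR$-step via $\rho$, collapsing the cycle and leaving a strictly shorter $\U(\cR)$-reduction ending at $t_2$, on which the outer induction hypothesis applies to conclude $t_1 \to^*_\cR t_2 = \psi(t_2)$.

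The hardest part to carry out rigorously is the substitution synthesis together with the replay of the sub-reductions: one must account for every position where a duplicated variable is reduced differently in the $\U(\cR)$-derivation and verify that WLL forbids exactly those disagreements that would obstruct recovery of the $\cR$-step. Keeping precise track of positions, residuals, and variable occurrences — and organizing the induction so that innermost cycles are collapsed first — is where most of the technical effort lies.
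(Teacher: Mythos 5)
Note first that the paper offers no proof of this theorem: it is imported from the literature (\cite{GGS12}) and used as a black box in the proof of Theorem~\ref{th:relative-soundness}. Your proposal therefore has to stand on its own, and as it stands it has genuine gaps rather than being a rephrasing of an argument the paper contains.

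The first gap is the claim that every use of a conditional rule in the derivation yields a \emph{complete} $\Usymb$-cycle at a fixed position $p$. That is false in general: a $\Usymb^\rho_i$-rooted subterm created during the derivation can be \emph{erased} by a later step applied above it (via a left-hand-side variable that does not occur in the right-hand side), or \emph{duplicated} into several copies that evolve differently, and its position changes as the surrounding context is rewritten. The absence of $\Usymb$-symbols in $t_2$ only tells you that each \emph{surviving} occurrence was eventually consumed; it does not give the neatly nested cycle structure your induction needs, and organizing ``innermost cycles first'' requires the residual/descendant bookkeeping you explicitly defer. The second gap is that your induction hypothesis is stated for derivations between terms over $\cF$, but the sub-derivations $s_i\sigma_{i-1} \mathrel{\to^*_{\U(\cR)}} t_i\sigma_i$ you want to feed back into it live on \emph{mixed} terms: the matching substitutions $\sigma_i$ may substitute terms containing $\Usymb$-symbols (and these are exactly the problematic cases), so the induction hypothesis does not apply to them. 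Any workable version of this argument must generalize the statement to mixed terms via a back-translation and show that each $\U(\cR)$-step is simulated on the back-translated terms --- which is where the WLL hypothesis actually does its work and where the bulk of the proof in \cite{GGS12} lies (it is also the shape of the machinery this paper builds in Section~\ref{sec:relative-soundness} for the analogous claim about $\SR$). Relatedly, your coherence argument for the synthesized $\sigma$ is too optimistic: a variable $x \in \Var(r,s_1,\ldots,s_k)$ occurs once in $l,t_1,\ldots,t_k$, but its binding is copied into the $\overrightarrow{X_i}$ arguments of the successive $\Usymb$-terms and possibly several times into $s_{i+1}$, and those copies can be reduced independently in $\U(\cR)$; so ``receives a single reduct across the $\sigma_i$'s'' does not hold literally and is precisely the difficulty to be overcome, not a consequence of WLL.
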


Let \textit{P} be a property on rewrite rules, and $U$ be an
unraveling. A conditional rewrite rule $\rho$ is said to be
\emph{ultra-\textit{P} w.r.t.\ $U$} ($U$-\textit{P}, for short) if all
the rules in $U(\rho)$ have the property \textit{P}.
Note that $U$-\textit{P} is a syntactic property on rewrite rules, and
thus a DCTRS is called \emph{$U$-\textit{P}} if all rules in the DCTRS
are $U$-\textit{P}.
For example, $\cR$ is $\U$-LL if $\U(\cR)$ is LL. 
Some ultra-properties are reformulated without referring to unraveled
systems (cf.~\cite{NSS12lmcs}).
In addition, by definition, the $\U$-WLL property is characterized
without $\U$ as follows.
\begin{theorem}
 \label{thm:characterization_of_U-WLL}
$\cR$ is $\U$-WLL if and only if all unconditional rules in $\cR$ are
 WLL and every conditional rule $l \to r \Leftarrow
 \Condition{s_1}{t_1},\ldots,\Condition{s_k}{t_k}$ ($k>0$) in $\cR$
 satisfies that 
 \begin{itemize}
  \item[(a)] the sequence $l,t_1,\ldots,t_{k-1}$ is linear,
	     and
  \item[(b)] $\Numv{l,t_1,\ldots,t_k}{x} \leq 1$ for any variable $x \in
	     \Var(r)$. 
 \end{itemize}
\end{theorem}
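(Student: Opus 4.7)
My plan is to unfold the definitions of WLL and $\U$-WLL on a single rule and then match the per-rule conditions with (a) and (b). The unconditional case is trivial: $\U(l \to r) = \{l \to r\}$, so $\U$-WLL and WLL coincide for an unconditional rule. The real work is on a conditional rule $\rho\colon l \to r \Leftarrow \Condition{s_1}{t_1},\ldots,\Condition{s_k}{t_k}$ with $k > 0$. Every rule in $\U(\rho)$ is unconditional, so WLL on each of them says exactly that each variable of the right-hand side occurs exactly once in the left-hand side, which I can check by counting occurrences.

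I would then dispatch the three kinds of rules in $\U(\rho)$ in turn. For the initial rule $l \to \Usymb^\rho_1(s_1,\overrightarrow{X_1})$, determinism forces $\Var(s_1) \subseteq \Var(l) = X_1$, so the right-hand-side variables are exactly $\Var(l)$, and WLL here is equivalent to $l$ being linear. For an intermediate rule $\Usymb^\rho_i(t_i,\overrightarrow{X_i}) \to \Usymb^\rho_{i+1}(s_{i+1},\overrightarrow{X_{i+1}})$ with $1 \leq i < k$, the right-hand-side variables collapse, again by determinism, to $X_{i+1} = \Var(l,t_1,\ldots,t_i)$. A variable in $X_i$ already occurs once in the listing $\overrightarrow{X_i}$, so any extra occurrence inside $t_i$ would break WLL; and a variable in $\Var(t_i) \setminus X_i$ must occur exactly once in $t_i$. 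These two constraints are precisely ``$l,t_1,\ldots,t_i$ is a linear sequence,'' and accumulating them for $i = 0,1,\ldots,k-1$ gives condition (a).

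For the final rule $\Usymb^\rho_k(t_k,\overrightarrow{X_k}) \to r$, I use that $\cR$ is a 3-DCTRS together with determinism to get $\Var(r) \subseteq \Var(l,t_1,\ldots,t_k) = X_k \cup \Var(t_k)$, so every variable of $r$ already sits in the left-hand side. WLL demands that each such $x$ occur exactly once in $\Usymb^\rho_k(t_k,\overrightarrow{X_k})$, and this total count is precisely $\Numv{l,t_1,\ldots,t_k}{x}$ (the listing $\overrightarrow{X_k}$ contributes one iff $x \in X_k$, and $t_k$ contributes $\Numv{t_k}{x}$). Since $\Var(r)$ is contained in $\Var(l,t_1,\ldots,t_k)$, ``exactly one'' and ``at most one'' agree here, yielding condition (b). Each case has been an equivalence, so both directions of the theorem drop out at once.

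The main pitfall I anticipate is keeping the variable-occurrence bookkeeping honest: I must treat the contribution of $\overrightarrow{X_i}$ as one per element of $X_i$ (since the fixed ordering lists each variable once), and I must ensure that the per-rule conditions for the intermediate rules aggregate to linearity of the whole prefix $l,t_1,\ldots,t_{k-1}$ rather than linearity of each $t_i$ in isolation. Beyond this, the proof is a pure definition chase, invoking no rewriting-theoretic ingredient beyond determinism and Type~3.
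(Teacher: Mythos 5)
Your proof is correct and is precisely the definition-unfolding the paper has in mind: the paper asserts this characterization ``by definition'' and offers no separate proof, and your case analysis over the $k+1$ rules of $\U(\rho)$, using determinism to collapse $\Var(s_{i+1})$ into $X_{i+1}$ and Type~3 to place $\Var(r)$ inside $\Var(l,t_1,\ldots,t_k)$, is the intended argument. The only point worth making explicit is that your identification of $|t_k|_x + |\overrightarrow{X_k}|_x$ with $\Numv{l,t_1,\ldots,t_k}{x}$ in the final rule is valid only once (a) is in hand (otherwise a variable occurring twice in $l$ contributes $2$ to the latter but only $1$ via $\overrightarrow{X_k}$); since (a) is either already derived or assumed at that stage in each direction, this does not affect correctness.
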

Note that every $\U$-LL DCTRS is $\U$-WLL, while the converse of this
implication does not hold in general.
On the other hand, the class of $\U$-WLL DCTRSs is incomparable with the
class of WLL DCTRSs, e.g., $\symb{f}(x) \to x  \Leftarrow
\Condition{\symb{a}}{y},~\Condition{\symb{b}}{y},~\Condition{x}{\symb{c}}$
is WLL but not $\U$-WLL, and $\symb{f}(x) \to x  \Leftarrow
\Condition{\symb{a}}{y},~\Condition{y}{\symb{b}},~\Condition{\symb{c}}{y}$
is $\U$-WLL but not WLL. 
Though, every WLL DCTRS can be converted to a WLL and $\U$-WLL DCTRS
such that the reductions of these DCTRSs are the same.

In the following, we show that every WLL DCTRS $\cR$ can be converted to
an equivalent WLL and $\U$-WLL DCTRS.
We first convert a WLL conditional rule $\rho: l \to r \Leftarrow
\Condition{s_1}{t_1},\ldots,\Condition{s_k}{t_k}$ to a WLL and $\U$-WLL
one as follows: 
for every variable $x$ in $\rho$ such that
$\Numv{l,t_1,\ldots,t_k}{x}>1$, we linearize the occurrences of $x$ by
replacing each of them by a fresh variable, obtaining $\rho': l' \to r
\Leftarrow \Condition{s_1}{t'_1},\ldots,\Condition{s_k}{t'_k}$;%
\footnote{ Such $x$ does not appear in any of $r,s_1,\ldots,s_k$ because
$\rho$ is WLL.} 
Let $x_1,\ldots,x_j$ be the introduced variables, and $\sigma$ be the
variable renaming that maps $x_i$ to the original one, i.e.,
$\Dom(\sigma)=\{x_1,\ldots,x_j\}$, $l'\sigma = l$, and $t'_i\sigma =
t_i$ for $1 \leq i \leq k$;
We add the condition
$\Condition{\Tuple_j(x_1,\ldots,x_j)}{\Tuple_j(x_1\sigma,\ldots,x_j\sigma)}$
into $\rho'$ as the last condition, where $\Tuple_j$ is a fresh $j$-ary
constructor. 
We denote this transformation by $\T$, i.e., $\T(\rho)= l' \to r
\Leftarrow
\Condition{s_1}{t'_1},\ldots,\Condition{s_k}{t'_k},\Condition{\Tuple_j(x_1,\ldots,x_j)}{\Tuple_j(x_1\sigma,\ldots,x_j\sigma)}$. 
In addition, we abuse $\T$ for unconditional rules and $\cR$: $\T(l \to
r) = l \to r$ and $\T(\cR)= \{ \T(\rho) \mid \rho \in \cR \}$. 
By definition, $\T(\rho)$ is WLL and $\U$-WLL, i.e., $\T$ transforms a
WLL DCTRS into a WLL and $\U$-WLL DCTRS. 
It is clear that if $s \in T(\cF,\cV)$ and $s \mathrel{\to^*_{\T(\cR)}}
t$, then $t \in T(\cF,\cV)$.
\begin{theorem}
 \label{thm:from_WLL_to_UWLL}
 Let $\cR$ be a WLL DCTRS over a signature $\cF$.
 Then, 
 ${\to^*_\cR} = {\to^*_{\T(\cR)}}$ over $T(\cF,\cV)$
\end{theorem}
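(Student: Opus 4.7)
The plan is to prove the two inclusions ${\to^*_\cR} \subseteq {\to^*_{\T(\cR)}}$ and ${\to^*_{\T(\cR)}} \cap (T(\cF,\cV) \times T(\cF,\cV)) \subseteq {\to^*_\cR}$ separately, each by strong induction on the conditional depth $n$ in the $\to_{(n),\cdot}$ hierarchy. Since the left- and right-hand sides of every rule of $\T(\cR)$ lie in $T(\cF,\cV)$, I would first record, as an auxiliary lemma proved by the same kind of induction, that rewriting in $\T(\cR)$ preserves $T(\cF,\cV)$; this ensures that the ``smaller'' reductions produced along the way can always be fed back into the induction hypothesis on the backward direction.

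For ${\to^*_\cR} \subseteq {\to^*_{\T(\cR)}}$ it suffices to show one step. Given $s \to_{(n),\cR} t$ via $\rho$ at position $p$ with substitution $\theta$, I extend $\theta$ to a substitution $\theta'$ for $\T(\rho)$ by setting $x_i\theta' = (x_i\sigma)\theta$ on each fresh variable introduced by $\T$ and leaving $\theta$ unchanged on original variables. Because $l'\sigma = l$ and $t'_i\sigma = t_i$, one directly gets $l'\theta' = l\theta$ and $t'_i\theta' = t_i\theta$. The first $k$ conditions of $\T(\rho)$ then collapse to the original conditions of $\rho$ (note $s_i\theta' = s_i\theta$, since by WLL no renamed variable occurs in $s_i$), which lift to $\T(\cR)$ by the induction hypothesis. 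The final tuple condition instantiates to $\Tuple_j((x_1\sigma)\theta,\ldots,(x_j\sigma)\theta) \to^* \Tuple_j((x_1\sigma)\theta,\ldots,(x_j\sigma)\theta)$ and is thus trivially satisfied, so $s \to_{\T(\cR)} t$.

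For the converse, consider a step $s \to_{(n),\T(\cR)} t$ with $s,t \in T(\cF,\cV)$ via $\T(\rho)$ at $p$ with substitution $\theta'$. Because $\Tuple_j$ is fresh and has no defining rule in $\T(\cR)$, the tuple condition $\Tuple_j(x_1,\ldots,x_j)\theta' \to^*_{(n-1),\T(\cR)} \Tuple_j(x_1\sigma,\ldots,x_j\sigma)\theta'$ decomposes componentwise to $x_i\theta' \to^*_{(n-1),\T(\cR)} (x_i\sigma)\theta'$ for each $i$. Define $\theta = \theta'|_{\Var(\rho)}$. Applying the componentwise reductions in parallel at the positions where $\sigma$ collapses fresh variables back to originals gives $l'\theta' \to^*_{(n-1),\T(\cR)} l\theta$ and $t'_i\theta' \to^*_{(n-1),\T(\cR)} t_i\theta$, while $s_i\theta' = s_i\theta$ holds as before. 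Concatenating yields $s_i\theta \to^*_{(n-1),\T(\cR)} t_i\theta$ in $T(\cF,\cV)$, so the induction hypothesis delivers $s_i\theta \to^*_\cR t_i\theta$, and similarly $l'\theta' \to^*_\cR l\theta$. Assembling: $s = s[l'\theta']_p \to^*_\cR s[l\theta]_p \to_{\rho,\theta} s[r\theta]_p = t$, where the last step uses $\rho$ with $\theta$ because all its conditions are now verified and $r\theta = r\theta'$ (by WLL no fresh variable occurs in $r$).

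The main obstacle is the bookkeeping in the backward direction, specifically around the ``entirely renamed'' original variables: such an $x$ does not occur in $l'$, so its value $x\theta'$ is constrained only by the tuple condition, and one must argue, via the $T(\cF,\cV)$-preservation lemma and the fact that each $x_i\theta'$ is a subterm of $l'\theta' \in T(\cF,\cV)$, that the common reduct $x\theta'$ and hence every instantiated condition term stays inside $T(\cF,\cV)$, so that the induction hypothesis really is applicable to each extracted sub-derivation.
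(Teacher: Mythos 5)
Your proposal is correct and follows essentially the same route as the paper, which establishes the two inclusions via the claims ``$s \mathrel{\to^n_{(m),\cR}} t$ implies $s \mathrel{\to^*_{\T(\cR)}} t$'' and ``$s \mathrel{\to^n_{(m),\T(\cR)}} t$ implies $s \mathrel{\to^*_{\cR}} t$'' by induction on the lexicographic product $(m,n)$ of the conditional level and the number of steps; your handling of the tuple condition (trivially reflexive in one direction, decomposed componentwise in the other because $\Tuple_j$ has no rules) is exactly the intended argument. One small imprecision: a fresh variable $x_i$ may occur only in some $t'_m$ rather than in $l'$, so $x_i\theta'$ need not be a subterm of $l'\theta'$ --- but it is then a subterm of $t'_m\theta'$, which lies in $T(\cF,\cV)$ by the preservation lemma applied to $s_m\theta' \mathrel{\to^*_{\T(\cR)}} t'_m\theta'$ (with $s_m\theta' \in T(\cF,\cV)$ by an inner induction on the condition index), so your bookkeeping still goes through.
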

\begin{proofsketch}
The following two claims can be proved by induction on the lexicographic
 product $(m,n)$: 
(i) if $s \mathrel{\to^n_{(m),\cR}} t$ then $s \mathrel{\to^*_{\T(\cR)}}
 t$, and 
(ii) if $s \mathrel{\to^n_{(m),\T(\cR)}} t$ then $s \mathrel{\to^*_\cR}
 t$. 
\qed
\end{proofsketch}

\subsection{The {\SRTransformation}}
\label{subsec:SR}

Next, we introduce the {\SRtransformation} and its properties.
Before transforming a CTRS $\cR$, we first extend the signature of $\cR$
as follows:
\begin{itemize}
 \item we keep the constructors of $\cR$, while replacing $\symb{c}/n$
       by $\olsymb{c}/n$, 
 \item the arity $n$ of defined symbol $\symb{f}$ is extended to $n+m$
       where $\symb{f}$ has $m$ conditional rules in $\cR$, replacing
       $\symb{f}$ by $\olsymb{f}$, the arity of which is $n+m$, 
 \item a fresh constant $\bot$ and a fresh unary symbol $\guard{\cdot}$
       are introduced,
       and
 \item for every conditional rule $\rho: l \to r \Leftarrow
       \Condition{s_1}{t_1}, \ldots, \Condition{s_k}{t_k}$ in $\cR$, we
       introduce $k$ fresh symbols $[\ ]^\rho_1,[\ ]^\rho_2,\ldots,[\
       ]^\rho_k$ with the arities $1, 1+|\Var(t_1)|, 1+|\Var(t_1,t_2)|,
       \ldots, 1+|\Var(t_1,\ldots,t_{k-1})|$.
\end{itemize}
We assume that for every defined symbol $\symb{f}$, the conditional
rules for $\symb{f}$ are ranked by some arbitrary but fixed order.
We denote the extended signature by $\ol{\cF}$:
$\ol{\cF} = 
\{ \olsymb{c} \mid \symb{c} \in \cC_\cR\}
\cup
\{ \olsymb{f} \mid \symb{f} \in \cD_\cR \}
\cup
\{ \bot, \guard{\cdot}\}
\cup
\{ [\ ]^\rho_j \mid \rho: l \to r \Leftarrow
       \Condition{s_1}{t_1}, \ldots, \Condition{s_k}{t_k} \in \cR, ~ 1 \leq j \leq k \}$.
We introduce a mapping $\Extension$ to extend the arguments of defined
symbols in a term as follows: 
$\Extension(x) = x$ for $x \in \cV$; 
$\Extension(\symb{c}(\MyVec{t}{1}{n})) =
\symb{c}(\MapVec{\Extension}{t}{1}{n})$ for $\olsymb{c}/n \in \cC_\cR$;
$\Extension(\symb{f}(\MyVec{t}{1}{n})) =
\olsymb{f}(\MapVec{\Extension}{t}{1}{n},\MyVec{z}{1}{m})$ for
$\symb{f}/n \in \cD_\cR$, where $\symb{f}$ has $m$
conditional rules in $\cR$, $\Arity_{\ol{\cF}}(\olsymb{f}) = n+m$, and
$z_1,\ldots,z_m$ are fresh variables.
The extended arguments of $\olsymb{f}$ are used for evaluating the
corresponding conditions, and the fresh constant $\bot$ is introduced to
the extended arguments of defined symbols, which does not store any
evaluation.
To put $\bot$ into the extended arguments, we define a mapping\/
$\Reset{\cdot}$ that puts $\bot$ to all the extended arguments of
defined symbols, as follows:
$\Reset{x} = x$ for $x \in \cV$; 
$\Reset{\olsymb{c}(\MyVec{t}{1}{n})} =
\olsymb{c}(\Reset{\MyVec{t}{1}{n}})$ for 
$\symb{c}/n \in \cC_\cR$; 
$\Reset{\olsymb{f}(\MyVec{t}{1}{n},\MyVec{u}{1}{m})} =
\olsymb{f}(\Reset{\MyVec{t}{1}{n}},\bot,\ldots,\bot)$ for 
$\symb{f}/n \in \cD_\cR$; 
$\Reset{\guard{t}} = \guard{\Reset{t}}$; 
$\Reset{\bot} = \bot$; 
$\Reset{[\ldots]^\rho_j}=\bot$. 
Now we define a mapping {$\ol{\cdot}$} from $T(\cF,\cV)$ to
$T(\ol{\cF},\cV)$ as $\ol{t} = \Reset{\Extension(t)}$.
On the other hand, the partial inverse mapping $\,\wh{\cdot}\,$ for
$\,\ol{\cdot}\,$ is defined as follows: 
$\wh{x}=x$ for $x \in \cV$;
$\wh{\olsymb{c}(\MyVec{t}{1}{n})}=\symb{c}(\wh{t_1},\ldots,\wh{t_n})$
for $\symb{c}/n \in \cC_\cR$;
$\wh{\olsymb{f}(\MyVec{t}{1}{n},\ldots)}=\symb{f}(\wh{t_1},\ldots,\wh{t_n})$
for $\symb{f}/n \in \cD_\cR$;
$\wh{\guard{t}}=\,\wh{t}\,$.
Note that in applying $\Reset{\cdot}$ or $\,\wh{\cdot}\,$ to
\emph{reachable terms} defined later, the case of applying
$\Reset{\cdot}$ to $\bot$ or $[\ldots]^\rho_j$ never happens.

The {\SRtransformation}~\cite{SR06b} for SDCTRSs has been defined for
only $\U$-LL SDCTRSs---more precisely, any other case has not been
discussed in~\cite{SR06b}.
Originally, to generate a computationally equivalent TRS, a given CTRS
$\cR$ is assumed to be a $\U$-LL SDCTRS, while such an assumption is a
sufficient condition for computational equivalence.
To define the transformation itself, $\cR$ does not have to be an
SDCTRS, but the $\U$-LL property is used to ensure that for $\rho: l \to
r \Leftarrow \Condition{s_1}{t_1}, \ldots, \Condition{s_k}{t_k}$, the
sequence $l,t_1,\ldots,t_{k-1}$ is linear. 
To ensure it, the $\U$-WLL property is enough because of
Theorem~\ref{thm:characterization_of_U-WLL}~(a).
For this reason, the {\SRtransformation} is applicable not only to $\U$-LL
SDCTRSs but also to $\U$-WLL DCTRSs without any change.
\begin{definition}[$\SR$~\cite{SR06b}]
 \label{def:SR}
 Let $\cR$ be a $\U$-WLL DCTRS over a signature $\cF$ and $\ol{\cF}$ be
 the extended signature of $\cF$ mentioned above. 
 Then, the $i$-th conditional $\symb{f}$-rule $\rho:
 \symb{f}(\MyVec{w}{1}{n}) \to r \Leftarrow \Condition{s_1}{t_1},
 \ldots, \Condition{s_k}{t_k}$ is transformed into a set of $k+1$
 unconditional rules as follows: 
 \[
 \SR(\rho)=
 \left\{
 \begin{array}{r@{\,}c@{\,}r@{\>}c@{\>}l@{\,}c@{\,}l}
  \olsymb{f}(\MyVec{w'}{1}{n},\MyVec{z}{1}{i-1}, & \bot, & \MyVec{z}{i+1}{m})
   & \to & 
   \olsymb{f}(\MyVec{w'}{1}{n},\MyVec{z}{1}{i-1}, & [\guard{\ol{s_1}},\overrightarrow{V_1}]_1^\rho, & \MyVec{z}{i+1}{m}), \\[3pt]
   \olsymb{f}(\MyVec{w'}{1}{n},\MyVec{z}{1}{i-1}, & [\guard{\Extension(t_1)},\overrightarrow{V_1}]_1^\rho, & \MyVec{z}{i+1}{m})
   & \to & 
   \olsymb{f}(\MyVec{w'}{1}{n},\MyVec{z}{1}{i-1}, & [\guard{\ol{s_2}},\overrightarrow{V_2}]_2^\rho, & \MyVec{z}{i+1}{m}), \\
   & & & \vdots &  \\
  \olsymb{f}(\MyVec{w'}{1}{n},\MyVec{z}{1}{i-1}, & [\guard{\Extension(t_k)},\overrightarrow{V_k}]_k^\rho, & \MyVec{z}{i+1}{m})
   & \to &
   \guard{\ol{r}}
 \end{array}
 \right\}
 \]
 where $\MyVec{w'}{1}{n} = \MapVec{\Extension}{w}{1}{n}$, $V_j = \Var(\MyVec{t}{1}{j-1})$ for all $1 \leq j \leq k$,%
 \footnote{ For arbitrary DCTRSs, we may define $V_j$ as $V_j =
 \Var(\MyVec{t}{1}{j-1})\setminus\Var(\MyVec{w}{1}{n})$.} 
 and  
 $z_1,\ldots,z_{i-1},z_{i+1},\ldots,z_m$ are fresh variables. 
 An unconditional rule in $\cR$ is converted as follows:
 $\SR(l \to r) = \{~ \Extension(l) \to \guard{\ol{r}} ~\}$, 
 that is, $\SR(\symb{f}(\MyVec{w}{1}{n}) \to r) = \{~
 \olsymb{f}(\MapVec{\Extension}{w}{1}{n},\MyVec{z}{1}{m}) \to
 \guard{\ol{r}}~\}$, where $z_1,\ldots,z_m$ are fresh variables. 
 The set of auxiliary rules is defined as follows:
 \[
 \begin{array}{@{}l@{\>}l@{}}
  \cRaux
   = &
   \{~ 
   \guard{\guard{x}} \to \guard{x} 
   ~\} 
   \cup
   \{~
   \olsymb{c}(\MyVec{x}{1}{i-1},\guard{x_i},\MyVec{x}{i+1}{n}) 
   \to  \guard{\olsymb{c}(\MyVec{x}{1}{n})} \mid \symb{c}/n \in \cC_\cR,
   ~ 1 \leq i \leq n
   ~\} \\[3pt]
   &
   {}\cup
   \{~
   \olsymb{f}(\MyVec{x}{1}{i-1},\guard{x_i},\MyVec{x}{i+1}{n},\MyVec{z}{1}{m})
   \to  \guard{\olsymb{f}(\MyVec{x}{1}{n},\bot,\ldots,\bot)} \mid
   \symb{f}/n \in \cD_\cR, 
   ~1 \leq i \leq n
   ~\}
   \\
 \end{array}
\]
 where $x_1,\ldots,x_n,z_1,\ldots,z_m$ are distinct variables.  
 The transformation $\SR$ is defined as follows: 
 $\SR(\cR) = \bigcup_{\rho \in \cR} \SR(\rho) \cup \cRaux$.
 We say that $\SR$ (and also $\SR(\cR)$) is \emph{sound for $\cR$} if\/
 $\SR(\cR)$ is sound for $\cR$ w.r.t.\
 $(\,\guard{\,\ol{\cdot}\,},\,\wh{\cdot}\,)$.
 We also say that $\SR$ (and also $\SR(\cR)$) is \emph{complete for
 $\cR$} if\/ $\SR(\cR)$ is complete for $\cR$ w.r.t.\ 
 $\guard{\,\ol{\cdot}\,}$.
\end{definition}
Note that $\SR(\cR)$ is a TRS over $\ol{\cF}$, i.e., $\SR$ transforms a
$\U$-WLL DCTRS into a TRS. 
In examples below, we use $[~]_1, [~]_2,\ldots$ for fresh tuple symbols
introduced during the application of $\SR$, and we may abuse $\symb{f}$
instead of $\olsymb{f}$ if all the rules for $\symb{f}$ in $\cR$ are
unconditional, and as in~\cite{SR06,SR06b}, the original constructor
$\symb{c}$ is abused instead of $\olsymb{c}$.
It has been shown in~\cite{SR06} that $\SR$ is complete for all $\U$-LL
SDCTRSs. 
By definition, it is clear that $\SR$ is also complete for all $\U$-WLL
DCTRSs.
\begin{theorem}
 \label{thm:completeness_of_SR}
 $\SR$ is complete for $\U$-WLL DCTRSs.
\end{theorem}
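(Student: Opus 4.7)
The plan is to prove, by induction on $n$, that $s \mathrel{\to^*_{(n),\cR}} t$ implies $\guard{\ol{s}} \mathrel{\to^*_{\SR(\cR)}} \guard{\ol{t}}$; since ${\to_\cR} = \bigcup_{n \geq 0} {\to_{(n),\cR}}$ this entails the theorem. The base case $n=0$ is trivial because $\to_{(0),\cR}$ is empty. For the inductive step, a secondary induction on the number of one-step reductions reduces the task to a single step $s = C[l\sigma]_p \to_{(n+1),\cR} C[r\sigma]_p = t$ applying a rule $\rho: l \to r \Leftarrow \Condition{s_1}{t_1},\ldots,\Condition{s_k}{t_k}$, where $s_j\sigma \mathrel{\to^*_{(n),\cR}} t_j\sigma$ for every $j$. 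The outer induction hypothesis then supplies $\guard{\ol{s_j\sigma}} \mathrel{\to^*_{\SR(\cR)}} \guard{\ol{t_j\sigma}}$ for every $j$.

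The crux is a simulation lemma $\ol{l\sigma} \mathrel{\to^*_{\SR(\cR)}} \guard{\ol{r\sigma}}$. Let $i$ be the index of $\rho$ among the conditional $\Root(l)$-rules. I would first fire the first rule of $\SR(\rho)$ using the substitution that sends each $x \in \Var(l)$ to $\ol{\sigma(x)}$ and every extended-argument variable to $\bot$. A short auxiliary calculation, by induction on term structure, shows $\Extension(u)\{x \mapsto \ol{\sigma(x)} \mid x \in \Var(u)\}\{z \mapsto \bot \mid z \text{ ext-arg}\} = \ol{u\sigma}$, so the match is sound and the $i$-th extended argument becomes $[\guard{\ol{s_1\sigma}},\ldots]_1^\rho$. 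The outer induction hypothesis then reduces $\guard{\ol{s_1\sigma}}$ inside the bracket to $\guard{\ol{t_1\sigma}}$, after which the second rule of $\SR(\rho)$ fires; here Theorem~\ref{thm:characterization_of_U-WLL}(a) guarantees linearity of $l,t_1,\ldots,t_{k-1}$, so the match of $\Extension(t_1)$ against $\ol{t_1\sigma}$ is well-defined and compatible with the bindings already fixed on $\Var(l)$. Repeating this pattern $k$ times and then firing the final rule of $\SR(\rho)$ yields $\guard{\ol{r}}\theta$, where $\theta$ accumulates the bindings for $\Var(l),\Var(t_1),\ldots,\Var(t_k)$; because the DCTRS is a $3$-DCTRS, $\Var(r)$ is contained in this union, so $\guard{\ol{r}}\theta = \guard{\ol{r\sigma}}$ by the same auxiliary calculation.

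To finish, I would observe that $\ol{\cdot}$ distributes over contexts built from the original signature, yielding $\ol{s} = \ol{C}[\ol{l\sigma}]_p$, where $\ol{C}$ is the extended context obtained from $C$ by marking each function symbol and filling the extended arguments of defined symbols with $\bot$. Applying the simulation lemma inside this context gives $\guard{\ol{s}} \mathrel{\to^*_{\SR(\cR)}} \guard{\ol{C}[\guard{\ol{r\sigma}}]_p}$. The auxiliary rules in $\cRaux$ then propagate the inner guard outward through each symbol of $\ol{C}$; the defined-symbol auxiliary rules reset extended arguments to $\bot$, which is harmless because $\ol{C}$ already carries $\bot$ in those positions. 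The outcome is $\guard{\guard{\ol{t}}}$, which the rule $\guard{\guard{x}} \to \guard{x}$ collapses to $\guard{\ol{t}}$.

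The main obstacle will be the bookkeeping inside the simulation lemma: precisely describing the substitution that accumulates across the $k+1$ rule applications and verifying that pattern matching is well-defined at every intermediate step. The ultra-WLL assumption is used in two distinct places: part~(a) of Theorem~\ref{thm:characterization_of_U-WLL} prevents variable clashes among the fresh copies introduced by successive $t_j$'s in the left-hand sides of the $\SR(\rho)$ rules, while the $3$-DCTRS hypothesis ensures that the substitution at the final rule is defined on all of $\Var(r)$.
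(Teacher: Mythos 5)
Your proof is correct, and it is essentially the standard completeness argument for this kind of transformation. The paper itself does not spell out a proof of Theorem~\ref{thm:completeness_of_SR}: it appeals to the completeness proof of \cite{SR06} for $\U$-LL SDCTRSs and observes that the only structural fact that proof needs---linearity of $l,t_1,\ldots,t_{k-1}$, which Theorem~\ref{thm:characterization_of_U-WLL}(a) supplies---is still available for $\U$-WLL DCTRSs. What you have written is a self-contained reconstruction of exactly that argument: the double induction on the level $n$ of $\to_{(n),\cR}$ and the step count, the simulation lemma $\ol{l\sigma} \mathrel{\to^*_{\SR(\cR)}} \guard{\ol{r\sigma}}$ driving the $k+1$ rules of $\SR(\rho)$ with the conditions discharged by the outer induction hypothesis, and the upward propagation of $\guard{\cdot}$ through the context via $\cRaux$ followed by $\guard{\guard{x}} \to \guard{x}$. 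Two small points worth making explicit if you write this out in full: first, $\Var(r) \subseteq \Var(l,t_1,\ldots,t_k)$ needs both the Type~3 assumption and determinism ($\Var(s_i) \subseteq \Var(l,t_1,\ldots,t_{i-1})$); second, condition~(a) says nothing about $t_k$, which may be non-linear or share variables with $l$, but the match of $\Extension(t_k)$ against $\ol{t_k\sigma}$ is still consistent because every occurrence of a variable $x$ is matched against the same term $\ol{x\sigma}$. Neither point is a gap, only bookkeeping you have correctly anticipated.
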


To evaluate conditions of the $i$-th conditional rule
$\symb{f}(\MyVec{w}{1}{n}) \to r_i \Leftarrow \Condition{s_{1}}{t_{1}},
\ldots, \Condition{s_{k}}{t_{k}}$, the $i$-th conditional rule is
transformed into the $k+1$ unconditional rules: 
a term of the form $[\guard{t},\MyVec{u}{1}{n_j}]^\rho_j$ represents an
intermediate state $t$ of the evaluation of the $j$-th condition
$\Condition{s_j}{t_j}$ carrying $\MyVec{u}{1}{n_j}$ for
$\Var(\MyVec{t}{1}{j-1})$, the first unconditional rule starts to
evaluate the condition (an instance of $\ol{s_1}$), and the remaining
$k$ rules examine whether the corresponding conditions hold. 
On the other hand, the first rule $\guard{\guard{x}} \to \guard{x}$ in
$\cRaux$ removes the nesting of $\guard{\cdot}$, the second rule
$\olsymb{c}(\MyVec{x}{1}{i-1},\guard{x_i},\MyVec{x}{i+1}{n}) \to
\guard{\olsymb{c}(\MyVec{x}{1}{n})}$ is used for shifting
$\guard{\cdot}$ upward, and the third rule
$\olsymb{f}(\MyVec{x}{1}{i-1},\guard{x_i},\MyVec{x}{i+1}{n},\MyVec{z}{1}{m})
\to \guard{\olsymb{f}(\MyVec{x}{1}{n},\bot,\ldots,\bot)}$ is used for
both shifting $\guard{\cdot}$ upward and resetting the evaluation of
conditions at the extended arguments of $\olsymb{f}$. 
The unary symbol $\guard{\cdot}$ and its rules in $\cRaux$ are
introduced to preserve confluence of the original CTRS $\cR$ on
reachable terms (see~\cite{SR06} for the detail of the role of
$\guard{\cdot}$ and its rules). 

\begin{example}
 \label{ex:qsort-SR}
 Consider $\cRqsort$ in Example~\ref{ex:qsort} again.
 Introducing tuple symbols $\EvalsplitA{~}$,
 $\EvalsplitB{~}$,$\EvalsplitC{~}$, $\EvalsplitD{~}$, and
 $\Evalqsort{~}$, $\cRqsort$ is transformed by $\SR$ as follows: 
 \[
 \begin{array}{l@{\>}l}
  \SR(\cRqsort) = &
   \left\{
    \begin{array}{r@{\>}c@{\>}l}
     \olsymb{split}(x,\symb{nil},z_1,z_2) & \to & \guard{\symb{pair}(\symb{nil},\symb{nil})} \\
     \olsymb{split}(x,\symb{cons}(y,ys),\bot,z_2) & \to & 
      \olsymb{split}(x,\symb{cons}(y,ys),\EvalsplitA{\guard{\olsymb{split}(x,ys,\bot,\bot)}},z_2) \\
     \olsymb{split}(x,\symb{cons}(y,ys),\EvalsplitA{\guard{\symb{pair}(xs,zs)}},z_2) & \to & 
      \olsymb{split}(x,\symb{cons}(y,ys),\EvalsplitB{\guard{x \leq y},xs,zs},z_2) \\
     \olsymb{split}(x,\symb{cons}(y,ys),\EvalsplitB{\guard{\symb{true}},xs,zs},z_2) & \to & \guard{\symb{pair}(xs,\symb{cons}(y,zs))} \\
     \olsymb{split}(x,\symb{cons}(y,ys),z_1,\bot) & \to & 
      \olsymb{split}(x,\symb{cons}(y,ys),z_1,\EvalsplitC{\guard{\olsymb{split}(x,ys,\bot,\bot)}}) \\
     \olsymb{split}(x,\symb{cons}(y,ys),z_1,\EvalsplitC{\guard{\symb{pair}(xs,zs)}}) & \to & 
      \olsymb{split}(x,\symb{cons}(y,ys),z_1,\EvalsplitD{\guard{x \leq y},xs,zs}) \\
     \olsymb{split}(x,\symb{cons}(y,ys),z_1,\EvalsplitD{\guard{\symb{false}},xs,zs}) & \to & 
      \guard{\symb{pair}(\symb{cons}(y,xs),zs)} \\
     \olsymb{qsort}(\symb{nil},z_1) & \to & \guard{\symb{nil}} \\
     \olsymb{qsort}(\symb{cons}(x,xs),\bot) & \to & 
      \olsymb{qsort}(\symb{cons}(x,xs),\Evalqsort{\guard{\olsymb{split}(x,xs,\bot,\bot)}}) \\
     \olsymb{qsort}(\symb{cons}(x,xs),\Evalqsort{\guard{\symb{pair}(ys,zs)}})
      & \to & \guard{\olsymb{qsort}(ys,\bot) \Append \symb{cons}(x,\olsymb{qsort}(zs,\bot))} \\
    \end{array}
 \right\}
 \\
 & {} \cup \cRqsortaux
 \end{array}
 \]
 where
 \[
 \cRqsortaux =
 \left\{
 \begin{array}{c}
  \begin{array}{r@{\>}c@{\>}l@{~~~~~}r@{\>}c@{\>}l@{~~~~~}r@{\>}c@{\>}l}
   \symb{0} \leq y & \to & \guard{\symb{true}}, &
    \symb{s}(x) \leq \symb{0} & \to & \guard{\symb{false}}, &
    \symb{s}(x) \leq \symb{s}(y) & \to & \guard{x \leq y}, \\
   \symb{nil} \Append ys & \to & \guard{ys}, &
    \symb{cons}(x,xs) \Append ys & \to & \guard{\symb{cons}(x, xs \Append ys)}, \\
  \end{array}
  \\
  \begin{array}{r@{\>}c@{\>}l@{~~~~~~~~}r@{\>}c@{\>}l}
   \guard{\guard{x}} & \to & \guard{x},
    &
    \symb{s}(\guard{x}) & \to & \guard{\symb{s}(x)}, \\
   \symb{cons}(\guard{x},xs) & \to & \guard{\symb{cons}(x,xs)}, &
    \symb{cons}(x,\guard{xs}) & \to & \guard{\symb{cons}(x,xs)}, \\
   \symb{pair}(\guard{x},y) & \to & \guard{\symb{pair}(x,y)}, &
    \symb{pair}(x,\guard{y}) & \to & \guard{\symb{pair}(x,y)}, \\
   \olsymb{split}(\guard{x},ys,z_1,z_2) & \to & \guard{\olsymb{split}(x,ys,\bot,\bot)}, &
    \olsymb{split}(x,\guard{ys},z_1,z_2) & \to & \guard{\olsymb{split}(x,ys,\bot,\bot)}, \\
   \olsymb{qsort}(\guard{xs},z_1) & \to & \guard{\olsymb{qsort}(xs,\bot)}, \\
   \guard{x} \leq y & \to & \guard{x \leq y}, &
    x \leq \guard{y} & \to & \guard{x \leq y}, \\
   \guard{xs} \Append ys & \to & \guard{xs \Append ys}, &
    xs \Append \guard{ys} & \to & \guard{xs \Append ys} \\
  \end{array}
 \end{array}
 \right\}
 \]
 $\cRqsortaux$ is not a constructor system because of e.g.,
 $\guard{\guard{x}} \to \guard{x}$, and thus, $\SR(\cR_1)$ is not a
 constructor system, either, while $\cR_1$ is so.
\end{example}

Rules in $\U(\cR)$ and $\SR(\cR)\setminus \cRaux$ have some
correspondence each other. 
An unconditional rule $l \to r \in \cR \cap \U(\cR)$ is said to
\emph{correspond to} $\Extension(l) \to \guard{\ol{r}} \in \SR(\cR)$,
and vice versa; 
For the $i$-th conditional $\symb{f}$-rule $\rho:
\symb{f}(\MyVec{w}{1}{n}) \to r \Leftarrow
\Condition{s_1}{t_1},\ldots,\Condition{s_k}{t_k}$, the $j$-th rule of
$\U(\rho)$ in Definition~\ref{def:U} is said to \emph{correspond to} the
$j$-th rule of $\SR(\rho)$ in Definition~\ref{def:SR}, and vice versa.

One of the important properties of $\SR$ is that $\U(\cR)$ is WLL if and
only if so is $\SR(\cR)$. 
By definition, this claim holds trivially. 
\begin{theorem}
 \label{thm:UWLL-properties}
 $\cR$ is $\U$-WLL if and only if\/ $\SR(\cR)$ is WLL.
\end{theorem}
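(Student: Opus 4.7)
The plan is to prove both directions by directly inspecting the variable occurrences in each rule of $\SR(\cR)$ and matching them against the characterization of $\U$-WLL given in Theorem~\ref{thm:characterization_of_U-WLL}. First I would dispose of the auxiliary rules: by inspection, every rule in $\cRaux$ has a left-hand side in which all variables are pairwise distinct, so each rule in $\cRaux$ is left-linear and hence WLL. Thus the WLL property of $\SR(\cR)$ reduces to the WLL property of the rules obtained from the rules of $\cR$.

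Second, for an unconditional rule $l \to r \in \cR$, note that $\SR(l \to r) = \{\Extension(l) \to \guard{\ol{r}}\}$. Since $\Extension$ inserts fresh variables $z_1,\ldots,z_m$ at the extended argument positions of each defined symbol — variables which do not appear on the right-hand side because $\Reset{\cdot}$ puts $\bot$ at every such position in $\guard{\ol{r}}$ — the count $\Numv{\Extension(l)}{x}$ agrees with $\Numv{l}{x}$ for every $x \in \Var(\guard{\ol{r}}) = \Var(r)$. So $\Extension(l) \to \guard{\ol{r}}$ is WLL iff $l \to r$ is WLL.

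Third, for a conditional rule $\rho: \symb{f}(\MyVec{w}{1}{n}) \to r \Leftarrow \Condition{s_1}{t_1}, \ldots, \Condition{s_k}{t_k}$, I would count, rule by rule, the occurrences of each relevant variable in the $k+1$ rules of $\SR(\rho)$. For the $j$-th rule with $0 \leq j \leq k-1$, the variables of the right-hand side that matter for WLL are those in $V_{j+1} = \Var(\MyVec{t}{1}{j})$ (enumerated as individual arguments of the bracket symbol) together with $\Var(s_{j+1})$, the latter being contained in $\Var(l,\MyVec{t}{1}{j})$ by determinism; the left-hand side contributes one occurrence of every variable in $\Var(l) \cup \Var(t_j) \cup V_j$ (fresh $z$'s being irrelevant). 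A direct tally then shows that WLL of rules $0,\ldots,k-1$ is equivalent to the sequence $l,t_1,\ldots,t_{k-1}$ being linear, which is condition~(a). For the $k$-th rule, the right-hand side is $\guard{\ol{r}}$, so WLL amounts to $\Numv{l,t_1,\ldots,t_{k-1},t_k}{x} \leq 1$ for every $x \in \Var(r)$, which is condition~(b). Combining these with the unconditional case gives the theorem via Theorem~\ref{thm:characterization_of_U-WLL}.

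No step looks genuinely hard; the only thing to be careful about is bookkeeping — in particular, recognising that the fresh $z_j$'s introduced by $\Extension$ and by the SR-rule schema appear on only one side of each rule and therefore never obstruct WLL, and that the entries $\overrightarrow{V_j}$ on the left-hand sides contribute exactly one occurrence per variable in $V_j$.
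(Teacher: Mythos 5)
Your approach is the right one and is essentially what the paper intends: the paper offers no argument beyond ``by definition, this claim holds trivially,'' so a careful rule-by-rule tally of variable occurrences against the characterization in Theorem~\ref{thm:characterization_of_U-WLL} is exactly the proof, and your treatment of $\cRaux$ (all left-linear) and of the unconditional rules is correct. One bookkeeping slip should be repaired, however. For the first $k$ rules of $\SR(\rho)$ you list the right-hand-side variables that matter as those of $V_{j+1}$ together with $\Var(s_{j+1})$, but the right-hand side of each of these rules also repeats the pattern $\MyVec{w'}{1}{n}$ inside $\olsymb{f}(\ldots)$, so \emph{every} variable of $\Var(l)$ occurs on the right and must be counted against its occurrences on the left. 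This is precisely what forces $l$ itself to be linear; condition~(a) demands linearity of the whole sequence $l,t_1,\ldots,t_{k-1}$, not only of the $t_j$'s and of the variables occurring in the $s_j$'s. With only the variables you list, the rule for $j=0$ would merely yield $\Numv{l}{x}=1$ for $x\in\Var(s_1)$, which is too weak to recover condition~(a) in the direction ``$\SR(\cR)$ WLL implies $\cR$ $\U$-WLL.'' Once $\Var(l)$ is added to the right-hand-side tally (the fresh variables introduced by $\Extension$ and the $z_j$'s remain harmless, as you note), your equivalences go through and both directions of the theorem follow as you describe.
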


A term $t$ in $T(\ol{\cF},\cV)$ is called \emph{reachable} if there
exists a term $s$ in $T(\cF,\cV)$ such that $\guard{\ol{s}}
\mathrel{\to_{\SR(\cR)}^*} t$.
It is clear that for any reachable term $t \in T(\ol{\cF},\cV)$, any
term $t' \in T(\ol{\cF},\cV)$ with $t \mathrel{\to^*_{\SR(\cR)}} t'$ is
reachable.
In the following, for the extended signature $\ol{\cF}$, we only
consider subterms of reachable terms because it suffices to consider
them in discussing soundness. 
For brevity, subterms of reachable terms are also called
\emph{reachable}. 
In reachable terms, the introduced symbols $\bot$ and $[~]^\rho_i$
appear at appropriate positions of a term, i.e., at the root position of
the term or an $i$-th argument of a subterm $\olsymb{f}(\ldots)$ where
$i > n$ and $\symb{f}$ is an $n$-ary defined symbol.

\section{Soundness for WLL and Ultra-WLL DCTRSs}
\label{sec:relative-soundness}

In this section, we prove that $\SR$ is sound for WLL and $\U$-WLL
DCTRSs. 
In the following, we use $\cR$ as a $\U$-WLL DCTRS over a signature
$\cF$.

It would be possible to follow the proof shown in~\cite{NYG14wpte} for
soundness of $\SR$ for WLL normal CTRSs.
However, the proof is very long, and it is easy to guess that an
analogous proof for WLL and $\U$-WLL DCTRSs---more complicated systems
than normal CTRSs---becomes much longer.
In this paper, we try to shorten the proof, providing a clearer one. 

Our insight for a proof is that a term in $T(\ol{\cF},\cV)$
represents some corresponding terms in $T(\U(\cF),\cV)$, and a
derivation of $\SR(\cR)$ starting with $\guard{\ol{s}}$ ($s \in
T(\cF,\cV)$) represents the corresponding computation tree of $\U(\cR)$,
whose root is $s$. 
We illustrate this observation by the following WLL and $\U$-WLL normal
DCTRS:
\[
 \cRidea
 =
 \left\{
 \begin{array}{r@{\>}c@{\>}l@{~~~~~}r@{\>}c@{\>}l@{~~~~~}r@{\>}c@{\>}l@{~~~~~}r@{\>}c@{\>}l}
  \symb{f}(x) & \to & \symb{c} \Leftarrow \Condition{x}{\symb{c}},
   &
   \symb{a} & \to & \symb{c},
   &
   \symb{b} & \to & \symb{c},
   \\
  \symb{f}(x) & \to & \symb{d} \Leftarrow \Condition{x}{\symb{d}},
   &
   \symb{a} & \to & \symb{d},
   &
   \symb{b} & \to & \symb{d},
   \\
  \symb{g}(x) & \to & \symb{h}(x,x),
   &
   \symb{h}(\symb{c},\symb{d}) & \to & \symb{c},
   &
   \symb{h}(x,\symb{f}(x)) & \to & \symb{d} 
   \\
 \end{array}
 \right\}
\]
To simplify the discussion, we use a normal CTRS, and omit $[~]^\rho_j$
introduced during the application of $\SR$.
$\cRidea$ is transformed by $\U$ and $\SR$, respectively, as follows:
\[
 \U(\cRidea)
 =
 \left\{
 \begin{array}{r@{\>}c@{\>}l@{~~~~~}r@{\>}c@{\>}l@{~~~~~}r@{\>}c@{\>}l@{~~~~~}r@{\>}c@{\>}l@{~~~~~}r@{\>}c@{\>}l}
  \symb{f}(x) & \to & \UfC(x,x), &
   \UfC(\symb{c},x) & \to & \symb{c},
   &
   \symb{a} & \to & \symb{c},
   &
   \symb{b} & \to & \symb{c},
   \\
  \symb{f}(x) & \to & \UfD(x,x), &
   \UfD(\symb{d},x) & \to & \symb{d},
   &
   \symb{a} & \to & \symb{d},
   &
   \symb{b} & \to & \symb{d},
   \\
  \symb{g}(x) & \to & \symb{h}(x,x),
   &
   \symb{h}(\symb{c},\symb{d}) & \to & \symb{c},
   &
   \symb{h}(x,\symb{f}(x)) & \to & \symb{d}
   \\
 \end{array}
 \right\}
\]
\[
 \SR(\cRidea)
 =
 \left\{
 \begin{array}{r@{\>}c@{\>}l@{~~~~~}r@{\>}c@{\>}l@{~~~~~}r@{\>}c@{\>}l@{~~~~~}r@{\>}c@{\>}l@{~~~~~}r@{\>}c@{\>}l}
  \olsymb{f}(x,\bot,z_2) & \to & \olsymb{f}(x,\guard{x},z_2), &
   \olsymb{f}(x,\guard{\symb{c}},z_2) & \to & \guard{\symb{c}},
   &
   \olsymb{a} & \to & \guard{\symb{c}},
   &
   \olsymb{b} & \to & \guard{\symb{c}},
   \\
  \olsymb{f}(x,z_1,\bot) & \to & \olsymb{f}(x,z_1,\guard{x}), &
   \olsymb{f}(x,z_1,\guard{\symb{d}}) & \to & \guard{\symb{d}},
   &
   \olsymb{a} & \to & \guard{\symb{d}},
   &
   \olsymb{b} & \to & \guard{\symb{d}},
   \\
  \olsymb{g}(x) & \to & \guard{\olsymb{h}(x,x)},
   &
   \olsymb{h}(\symb{c},\symb{d}) & \to & \guard{\symb{c}},
   &
   \olsymb{h}(x,\olsymb{f}(x,z_1,z_2)) & \to & \guard{\symb{d}},
   & & \ldots
        \\
 \end{array}
 \right\}
\]
Each reachable term in $T(\ol{\cF},\cV)$ represents a finite set of
terms in $T(\U(\cF),\cV)$:
$\olsymb{f}(\olsymb{a},\guard{\symb{c}},\bot)$ represents two terms
$\symb{f}(\symb{a})$ and $\UfC(\symb{c},\symb{a})$; 
$\guard{\olsymb{h}(\olsymb{f}(\olsymb{a},\guard{\symb{c}},\bot),\olsymb{f}(\olsymb{a},\guard{\symb{c}},\bot))}$
represents four terms $\symb{h}(\symb{f}(\symb{a}),\symb{f}(\symb{a}))$,
$\symb{h}(\symb{f}(\symb{a}),\UfC(\symb{c},\symb{a}))$,
$\symb{h}(\symb{f}(\symb{a}),\symb{f}(\symb{a}))$, and
$\symb{h}(\UfC(\symb{c},\symb{a}),\UfC(\symb{c},\symb{a}))$.
These correspondence will be formalized as a mapping $\Convert$ from
$T(\ol{\cF},\cV)$ to $2^{T(\U(\cF),\cV)}$ later. 
Figure~\ref{fig:idea1} illustrates a derivation of $\SR(\cRidea)$ and
its corresponding computation tree (more precisely, a DAG) of
$\U(\cRidea)$ where reduced terms are underlined, and in each row, the
leftmost term is the one appearing in the derivation of $\SR(\cRidea)$
and the remaining are terms in $T(\U(\cF),\cV)$ that are represented by
the leftmost one. 
\begin{figure}[t]
{\scriptsize 
\[
    \xymatrix@R=7pt@C=8pt{
        \guard{\olsymb{h}(\underline{\olsymb{f}(\olsymb{a},\bot,\bot)},\olsymb{f}(\olsymb{f}(\olsymb{b},\bot,\bot),\bot,\bot))} \arSR[d] 
        &
        &
        \symb{h}(\underline{\symb{f}(\symb{a})},\symb{f}(\symb{f}(\symb{b}))) \arUl[dl] \ar@{=}[dr]
        &
        \\
        \Enc{2l}
        \guard{\olsymb{h}(\olsymb{f}(\olsymb{a},\guard{\underline{\olsymb{a}}},\bot),\olsymb{f}(\olsymb{f}(\olsymb{b},\bot,\bot),\bot,\bot))} \arSR[d]
        &
        \Enc[rr]{2r}
        \ar@{.>}^<<<<{\Convert} "2,1" ; "Enc2r"
        \symb{h}(\UfC(\underline{\symb{a}},\symb{a}),\symb{f}(\symb{f}(\symb{b}))) \arU
        &
        &
        \symb{h}(\symb{f}(\symb{a}),\symb{f}(\symb{f}(\symb{b}))) \ar@{=}[d] 
        &
        \txt{corresponding terms}
        \\
        \guard{\olsymb{h}(\olsymb{f}(\olsymb{a},\underline{\guard{\guard{\symb{c}}}},\bot),\olsymb{f}(\olsymb{f}(\olsymb{b},\bot,\bot),\bot,\bot))} \arSR[d]
        &
        \symb{h}(\UfC(\symb{c},\symb{a}),\symb{f}(\symb{f}(\symb{b}))) \ar@{=}[d]
        &
        &
        \symb{h}(\symb{f}(\symb{a}),\symb{f}(\symb{f}(\symb{b}))) \ar@{=}[d] 
        \\
        \guard{\olsymb{h}(\olsymb{f}(\underline{\olsymb{a}},\guard{\symb{c}},\bot),\olsymb{f}(\olsymb{f}(\olsymb{b},\bot,\bot),\bot,\bot))} \arSR[d]
        &
        \symb{h}(\UfC(\symb{c},\underline{\symb{a}}),\symb{f}(\symb{f}(\symb{b}))) \arU
        &
        &
        \symb{h}(\symb{f}(\underline{\symb{a}}),\symb{f}(\symb{f}(\symb{b}))) \arU[d] 
        \\
        \guard{\olsymb{h}(\olsymb{f}(\guard{\symb{d}},\guard{\symb{c}},\bot),\underline{\olsymb{f}(\olsymb{f}(\olsymb{b},\bot,\bot),\bot,\bot)})} \arSR[d]
        &
        \symb{h}(\UfC(\symb{c},\symb{d}),\symb{f}(\underline{\symb{f}(\symb{b})})) \ar@{=}[d] \arU[dr]
        &
        &
        \symb{h}(\symb{f}(\symb{d}),\symb{f}(\underline{\symb{f}(\symb{b})})) \ar@{=}[d] \arU[dr] 
        \\
        \guard{\olsymb{h}(\olsymb{f}(\guard{\symb{d}},\guard{\symb{c}},\bot),\olsymb{f}(\olsymb{f}(\underline{\olsymb{b}},\guard{\olsymb{b}},\bot),\bot,\bot))} \arSR[d]
        &
        \symb{h}(\UfC(\symb{c},\symb{d}),\symb{f}(\symb{f}(\underline{\symb{b}}))) \arU 
        &
        \symb{h}(\UfC(\symb{c},\symb{d}),\symb{f}(\UfC(\symb{b},\underline{\symb{b}}))) \arU
        &
        \symb{h}(\symb{f}(\symb{d}),\symb{f}(\symb{f}(\underline{\symb{b}}))) \arU 
        &
        \symb{h}(\symb{f}(\symb{d}),\symb{f}(\UfC(\symb{b},\underline{\symb{b}}))) \arU
        \\
        \guard{\olsymb{h}(\olsymb{f}(\guard{\symb{d}},\guard{\symb{c}},\bot),\olsymb{f}(\olsymb{f}(\guard{\symb{d}},\guard{\underline{\olsymb{b}}},\bot),\bot,\bot))} \arSR[d]
        &
        \symb{h}(\UfC(\symb{c},\symb{d}),\symb{f}(\symb{f}(\symb{d}))) \ar@{=}[d] 
        &
        \symb{h}(\UfC(\symb{c},\symb{d}),\symb{f}(\UfC(\underline{\symb{b}},\symb{d}))) \arU
        &
        \symb{h}(\symb{f}(\symb{d}),\symb{f}(\symb{f}(\symb{d}))) \ar@{=}[d]  
        &
        \symb{h}(\symb{f}(\symb{d}),\symb{f}(\UfC(\underline{\symb{b}},\symb{d}))) \arU
        \\
        \guard{\olsymb{h}(\olsymb{f}(\guard{\symb{d}},\guard{\symb{c}},\bot),\olsymb{f}(\olsymb{f}(\guard{\symb{d}},\underline{\guard{\guard{\symb{c}}}},\bot),\bot,\bot))} \arSR[d]
        &
        \symb{h}(\UfC(\symb{c},\symb{d}),\symb{f}(\symb{f}(\symb{d}))) \ar@{=}[d]  
        &
        \symb{h}(\UfC(\symb{c},\symb{d}),\symb{f}(\UfC(\symb{c},\symb{d}))) \ar@{=}[d]
        &
        \symb{h}(\symb{f}(\symb{d}),\symb{f}(\symb{f}(\symb{d}))) \ar@{=}[d]   
        &
        \symb{h}(\symb{f}(\symb{d}),\symb{f}(\UfC(\symb{c},\symb{d}))) \ar@{=}[d]
        \\
        \guard{\underline{\olsymb{h}(\olsymb{f}(\guard{\symb{d}},\guard{\symb{c}},\bot),\olsymb{f}(\olsymb{f}(\guard{\symb{d}},\guard{\symb{c}},\bot),\bot,\bot))}} \arSR[d]
        &
        \symb{h}(\UfC(\symb{c},\symb{d}),\symb{f}(\symb{f}(\symb{d}))) 
        &
        \underline{\symb{h}(\UfC(\symb{c},\symb{d}),\symb{f}(\UfC(\symb{c},\symb{d})))} \arU
        &
        \underline{\symb{h}(\symb{f}(\symb{d}),\symb{f}(\symb{f}(\symb{d})))} \arU[dl]
        &
        \symb{h}(\symb{f}(\symb{d}),\symb{f}(\UfC(\symb{c},\symb{d}))) 
        \\
        \guard{\guard{\symb{d}}}
        &
        &
        \symb{d}
        \\
    }
\]
}
\vspace{-20pt}
    \caption{a derivation of $\SR(\cRidea)$ and its corresponding
 computation tree (DAG) of $\U(\cRidea)$.}     
    \label{fig:idea1}
\end{figure}

We capture the observation above by a mapping defined below.
\begin{definition}
 \label{def:Convert} 
 Let $\cR$ be a $\U$-WLL DCTRS. 
 Then, a mapping $\Convert$ from reachable terms (and lists of terms) in
 $T(\ol{\cF},\cV)$ to $2^{T(\U(\cF),\cV)}$ is recursively defined with
 an auxiliary mapping $\Auxconvert$ as follows:
 \begin{itemize}
  \item $\Convert(x) = \{x\}$ for $x \in \cV$,
  \item $\Convert(\olsymb{c}(\MyVec{t}{1}{n})) =
	\{\symb{c}(\MyVec{t'}{1}{n})\mid\MyVec{t'}{1}{n} \in
	\Convert(\MyVec{t}{1}{n}) \}$ for\/ $\symb{c}/n \in \cC_\cR$,
  \item $\Convert(\olsymb{f}(\MyVec{t}{1}{n},\MyVec{u}{1}{m}))
        = \Auxconvert(\olsymb{f}(\MyVec{t}{1}{n},\bot,\ldots,\bot)) \cup
        \bigcup_{i\in\{1,\ldots,m\},u_i\ne\bot}
        \Auxconvert(\olsymb{f}(\MyVec{t}{1}{n},\bot^{i-1},
        u_i,\bot^{m-i})$ for\/ $\symb{f}/n \in \cD_\cR$,
  \item $\Convert(\guard{t}) = \Convert(t)$, 
  \item $\Convert(t)=\emptyset$ where $t$ is not of the form above, 
  \item $\Convert(\epsilon)=\{\epsilon\}$, 
  \item $\Convert(\MyVec{t}{1}{n}) = \{ \MyVec{t'}{1}{n} \mid t'_i \in
	\Convert(t_i), ~ 1 \leq i \leq n \}$ for $n>1$, 
  \item $\Auxconvert(\olsymb{f}(\MyVec{t}{1}{n},\bot,\ldots,\bot)) =
	\{\symb{f}(\MyVec{t'}{1}{n})\mid\MyVec{t'}{1}{n}\in
	\Convert(\MyVec{t}{1}{n}) \}$ for\/ $\symb{f}/n \in
	\cD_\cR$, 
  \item $\Auxconvert(\olsymb{f}(\MyVec{s'}{1}{n},\bot^{i-1}, 
	[\guard{t'},\MyVec{t'}{1}{n_j}]^\rho_j,\bot^{m-i}) =
	\{
	\Usymb^\rho_j(u',\MyVec{u'}{1}{|X_j|})
	\mid
	u' \in \Convert(t'), ~ 
	\MyVec{u'}{1}{|X_j|} \in \Convert(\sigma(\overrightarrow{X_j}))
	\}
	$ for\/ $\symb{f}/n \in \cD_\cR$,
	where
	$\rho:
	\symb{f}(\MyVec{w}{1}{n}) \to r \Leftarrow
	\Condition{s_1}{t_1}, \ldots, \Condition{s_k}{t_k} \in \cR$ is
	the $i$-th conditional rule of\/ $\symb{f}$,
	$\Usymb^{\rho}_{j}(t_j,\overrightarrow{X_j}) \to r'\in
	\U(\rho)$, $X_j =
	\Var(\symb{f}(\MyVec{w}{1}{n}),\MyVec{t}{1}{j-1})$,
	$V_j=\Var(\MyVec{t}{1}{j-1})$, and $\sigma$ is a substitution
	such that $\sigma(\overrightarrow{V_j}) = \MyVec{t'}{1}{n_j}$ 
        and $\sigma\!\left(\MapVec{\Extension}{w}{1}{n}\right)=
	\MyVec{s'}{1}{n}$, 
        and
  \item $\Auxconvert(t)=\emptyset$ where $t$ is not of the form above.
 \end{itemize}
 The mapping $\Convert$ is straightforwardly extended to substitutions
 that have only reachable terms in the range:
 $\Convert(\sigma)=\{\sigma' \mid \Dom(\sigma')\subseteq\Dom(\sigma), \forall x
 \in \Dom(\sigma).~x\sigma' \in \Convert(x\sigma)\}$.
\end{definition}
In applying $\Convert$ to reachable terms, $\Convert$ is never applied
to terms rooted by either $\bot$ or $[~]^\rho_j$.
Though, to simplify proofs below, we define $\Convert$ for $[~]^\rho_j$: 
$\Convert([\,\MyVec{t}{1}{n}\,]^\rho_j)=\MapVec{\Convert}{t}{1}{n}$. 
$\Convert([\,\MyVec{t}{1}{n}\,]^\rho_j)$ is a set of term sequences, and
in the following, we are interested in
$|\Convert([\,\MyVec{t}{1}{n}\,]^\rho_j)|$ rather than elements in
$\Convert([\,\MyVec{t}{1}{n}\,]^\rho_j)$.

We say that a term $s \in T(\ol{\cF},\cV)$ \emph{contains an evaluation
of conditions} if $s$ has a subterm of the form
$\olsymb{f}(\MyVec{t}{1}{n},\ldots,[\ldots]^\rho_j,\ldots)$ for some
$\symb{f}/n \in \cF$. 
We say that a term $\olsymb{f}(\MyVec{t}{1}{n},\MyVec{u}{1}{m}) \in
T(\ol{\cF},\cV)$ \emph{cannot continue any evaluation of conditions at
root position} if for any conditional $\symb{f}$-rule $l \to r
\Leftarrow c \in \cR$, $\Extension(l)$ does not match
$\olsymb{f}(\MyVec{t}{1}{n},\MyVec{u}{1}{m})$. 
We also say that a term $s \in T(\ol{\cF},\cV)$ \emph{cannot continue
any evaluation of conditions} if any subterm of $s$, which is rooted by
a symbol $\olsymb{f}$ with $\symb{f}\in\cD_\cR$, cannot continue any
evaluation of conditions at root position. 
For a term $s$, $|\Convert(s)|=1$ means that $s$ is mapped by $\Convert$
to a unique one in $T(\U(\cF),\cV)$, i.e., $s$ does not contain any
evaluation or cannot continue any evaluation of conditions.
For a substitution $\sigma$ and a term $t$,
$|\Convert(\sigma|_{\Var(t)})| = 1$ means that for each variable $x$ in
$t$, the term substituted for $x$ is mapped by $\Convert$ to a unique
one in $T(\U(\cF),\cV)$, i.e., $|\Convert(x\sigma)|=1$.
The mapping $\Convert$ has the following properties.
\begin{lemma}
 \label{lem:Convert-properties}
 Let $\cR$ be a $\U$-WLL DCTRS, $s$ be a term in $T(\cF,\cV)$, $t$ be a
 reachable term in $T(\ol{\cF},\cV)$, and $\sigma$ be a substitution in
 $\Subst(\ol{\cF},\cV)$. 
 Then, all of the following hold:
 \begin{enumerate}
  \renewcommand{\labelenumi}{(\alph{enumi})}
  \item $\Convert(\ol{s})=\{s\}$.
  \item $\wh{t} \in \Convert(t)$ (i.e., $|\Convert(t)|\geq 1$).
  \item $\{t'\sigma' \mid t' \in \Convert(t),~\sigma' \in
	\Convert(\sigma)\} \subseteq \Convert(t\sigma)$. 
  \item If\/ $\Convert(\sigma|_{\Var(t)})=\{\sigma'\}$ for some
	substitution $\sigma'$ (i.e., $|\Convert(\sigma|_{\Var(t)})| = 1$), 
	then
	$\Convert(\ol{~\wh{t}~}\sigma) = \{ \,\wh{t}\,\sigma'\}$. 
  \item If\/ $\Convert(\sigma|_{\Var(s)})=\{\sigma'\}$ for some
	substitution $\sigma'$ (i.e., $|\Convert(\sigma|_{\Var(s)})| =
	1$), then $\Convert(\ol{s}\sigma) = \{s\sigma'\}$. 
 \end{enumerate}
\end{lemma}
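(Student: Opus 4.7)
The plan is to establish parts (a), (b), (c), and (e) by separate structural inductions and then to obtain (d) as an immediate corollary of (e) applied to $s=\wh{t}\in T(\cF,\cV)$, using $\Var(\wh{t})\subseteq\Var(t)$ to transfer the uniqueness hypothesis from $\Var(t)$ down to $\Var(\wh{t})$.

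For (a), I would induct on $s$. The variable and constructor cases unfold directly from the definitions of $\,\ol{\cdot}\,$ and $\Convert$. In the case $s=\symb{f}(\MyVec{a}{1}{n})$ with $\symb{f}/n\in\cD_\cR$, every extended argument of the root of $\ol{s}$ is $\bot$, so the union over non-$\bot$ components in the definition of $\Convert$ collapses, and the remaining $\Auxconvert$-expression reduces to $\{s\}$ by the IH on each $\ol{a_i}$. Part (b) proceeds in parallel on the reachable $t$: each defining clause of $\,\wh{\cdot}\,$ matches a clause of $\Convert$, and in the defined-symbol case the IH places $\wh{t}=\symb{f}(\wh{t_1},\ldots,\wh{t_n})$ inside the all-$\bot$ disjunct $\Auxconvert(\olsymb{f}(t_1,\ldots,t_n,\bot,\ldots,\bot))$ of the union.

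The main work is in (c). I would induct on $t$; variables, constructors, and the guard wrapper are routine. The crux is the case $t=\olsymb{f}(\MyVec{a}{1}{n},\MyVec{u}{1}{m})$ where some extended argument $u_i$ equals $[\guard{t''},\MyVec{v}{1}{n_j}]^\rho_j$. An element of $\Convert(t)$ drawn from this disjunct has the shape $\Usymb^\rho_j(w',\MyVec{w'}{1}{|X_j|})$ determined by a witness substitution $\tau_0$ with $\tau_0(\overrightarrow{V_j})=\MyVec{v}{1}{n_j}$ and $\tau_0(\MapVec{\Extension}{w}{1}{n})=\MyVec{a}{1}{n}$. To place the $\sigma'$-instance into the corresponding disjunct of $\Convert(t\sigma)$ I must verify that $\tau_1:=\tau_0\cdot\sigma$ is the witness substitution for $t\sigma$ and that $\tau_1(\overrightarrow{X_j})=\tau_0(\overrightarrow{X_j})\,\sigma$; the IH applied to $t''$ and to each $\tau_0(x)$ for $x\in X_j$ then closes the case. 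The $\U$-WLL hypothesis enters exactly here: by Theorem~\ref{thm:characterization_of_U-WLL}(a) the sequence $\MapVec{\Extension}{w}{1}{n},t_1,\ldots,t_{j-1}$ is linear, which both uniquely determines $\tau_0$ from the positions of the variables $x\in X_j$ and identifies each $\tau_0(x)$ with a subterm of $t$, making the structural IH applicable. This is the principal obstacle, and I expect it is the only place the $\U$-WLL hypothesis is actually needed among the five parts.

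For (e) I would induct on $s\in T(\cF,\cV)$, dualising the argument for (a): the disjunction in the definition of $\Convert$ on $\olsymb{f}$-terms is trivial since the extended arguments of $\ol{s}$ are $\bot$, the IH on each $\ol{a_i}\sigma$ contributes a singleton, and the variable case is immediate from the definition of $\Convert(\sigma)$. Finally (d) follows at once by applying (e) with $s=\wh{t}\in T(\cF,\cV)$: singleton-valuedness of $\Convert(\sigma|_{\Var(t)})=\{\sigma'\}$ forces $\Convert(x\sigma)=\{x\sigma'\}$ for every $x\in\Var(t)\supseteq\Var(\wh{t})$, hence $\Convert(\sigma|_{\Var(\wh{t})})=\{\sigma'|_{\Var(\wh{t})}\}$, and $\wh{t}\,\sigma'=\wh{t}\,\sigma'|_{\Var(\wh{t})}$ delivers the stated conclusion.
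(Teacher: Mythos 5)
Your proposal is correct and follows essentially the same route as the paper: (a) and (b) are immediate from the definitions, and (c) is the substantive part, proved by structural induction on $t$ with the $[\;]^\rho_j$-case handled via the witness substitution whose uniqueness rests on the linearity of $l,t_1,\ldots,t_{j-1}$ guaranteed by the $\U$-WLL assumption. The only (harmless) difference is that you prove (e) directly and derive (d) from it via $s=\wh{t}$ and $\Var(\wh{t})\subseteq\Var(t)$, whereas the paper proves (d) analogously to (c) and obtains (e) as the special case $t=\ol{s}$; both directions go through.
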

\begin{proofsketch}
 Claims (a) and (b) are trivial by definition. 
 Claim (c) can be proved by structural induction on $t$.
 Claim (d) can be proved analogously to (c) using
 $\Convert(\sigma|_{\Var(t)})=\{\sigma'\}$. 
 Claim (e) is trivial by (d).
\qed
\end{proofsketch}

As illustrated in Figure~\ref{fig:idea1}, our idea is simple and
intuitive. 
Unfortunately, however, the proof for soundness needs some technical
lemmas, while the entire proof is simpler than that
in~\cite{NYG14wpte}. 

Using the mapping $\Convert$ and soundness of $\U$ for WLL DCTRSs, we
show that for a term $s_0 \in T(\cF,\cV)$ and a term $t \in
T(\ol{\cF},\cV)$, if $\guard{\ol{s_0}} \mathrel{\to^*_{\SR(\cR)}} t$,
then $s_0 \mathrel{\to^*_{\U(\cR)}} \wh{t} \in \Convert(t)$
(Lemma~\ref{lem:Un-simulation} and
Theorem~\ref{th:relative-soundness}).
Since $\Convert(\guard{\ol{s_0}})=\{s_0\}$, to show this claim
generally, it suffices to prove the subclaim that for all reachable
terms $s$ and $t$ in $T(\ol{\cF},\cV)$, if $s \mathrel{\to_{l \to r \in
\SR(\cR)}} t$, then for each term $t' \in \Convert(t)$, there exists a
term $s' \in \Convert(s)$ such that $s' \mathrel{\to^*_{\U(\cR)}} t'$.
If $t'$ does not contain a converted term obtained from the reduced
subterm in $t$, then $t'$ is also in $\Convert(s)$.
Otherwise, for the single rewrite step $s \mathrel{\to_{l \to r \in
\SR(\cR)}} t$, one of the following three cases holds:
 \begin{itemize}
  \item The case where $l \to r$ is an auxiliary rule in $\cRaux$. 
	In this case, $\Convert(s) \supseteq \Convert(t)$, and thus, the
	subclaim holds. 
	For example, for any rewrite step by $\cRaux$ in
	Figure~\ref{fig:idea1}, each term in
	$\Convert(t)$ appears in $\Convert(s)$, i.e., for each node $t'$
	for $\Convert(t)$, there exists a node that is for $\Convert(s)$
	and is connected with  $t'$ by the $=$-edge.
    \item The case where $l \to r$ is in $\SR(\cR)\setminus \cRaux$ and
	  $r$ is linear.
	  It is easy to find $s' \in \Convert(s)$ such that $s'$ is
	  reduced by the rule in $\U(\cR)$ corresponding to $l \to r$:
	  $s' \mathrel{\to_{\U(\cR)}} t'$. 
	  In summary, for each $t' \in \Convert(t)$, there exists a term
	  $s'\in \Convert(s)$ such that $s' \mathrel{(= \cup
	  \to_{\U(\cR)})} t'$. 
	  For example, the DAG for $\U(\cRidea)$ in Figure~\ref{fig:idea1}
	  has only $=$- or $\to_{\U(\cRidea)}$-edges because there are
	  only rewrite steps with RL rules in $\SR(\cRidea)$. 
    \item The remaining case where $l \to r$ is in $\SR(\cR) \setminus
	  \cRaux$ and $r$ is \emph{not} linear. 
	  The difficulty of proving the subclaim comes from this case. 
	  We will discuss the detail of the difficulty later.
 \end{itemize}

In proving soundness of $\SR$, neither a variable with non-linear
occurrences nor a non-constructor pattern in the left-hand sides in
$\cR$ is  problematic. 
For example,
$\guard{\olsymb{h}(\olsymb{f}(\guard{\symb{d}},\guard{\symb{c}},\bot),\olsymb{f}(\olsymb{f}(\guard{\symb{d}},\guard{\symb{c}},\bot),\bot,\bot))}$
in Figure~\ref{fig:idea1} is reduced by $\SR(\cRidea)$ to
$\guard{\guard{\symb{d}}}$, but neither
$\symb{h}(\UfC(\symb{c},\symb{d}),\symb{f}(\symb{f}(\symb{d})))$ nor
$\symb{h}(\symb{f}(\symb{d}),\symb{f}(\UfC(\symb{c},\symb{d})))$ in
$\Convert(\guard{\olsymb{h}(\olsymb{f}(\guard{\symb{d}},\guard{\symb{c}},\bot),\olsymb{f}(\olsymb{f}(\guard{\symb{d}},\guard{\symb{c}},\bot),\bot,\bot))})$
can be reduced by $\U(\cRidea)$ to $\symb{d}$.
Though, this is not a problem because for each converted term, we need
the existence of an ancestor but not a descendant.
Viewed in this light, non-left-linearity of rules is not a problem, but
non-right-linearity of rules causes difficulty of proving soundness. 
On the other hand,
$\guard{\olsymb{h}(\symb{d},\olsymb{f}(\symb{d},\bot,\bot))}$ is reduced
by $\SR(\cRidea)$ to
$\olsymb{h}(\symb{d},\olsymb{f}(\symb{d},\guard{\symb{d}},\bot))$, and
then to $\guard{\guard{\symb{d}}}$.
We cannot reduce $\symb{h}(\symb{d},\UfC(\symb{d},\symb{d}))$ in
$\Convert(\guard{\olsymb{h}(\symb{d},\olsymb{f}(\symb{d},\guard{\symb{d}},\bot))})$
by the corresponding rule $\symb{h}(x,\symb{f}(x)) \to \symb{d}$ in
$\U(\cRidea)$ to $\symb{d}$, but another term
$\symb{h}(\symb{d},\symb{f}(\symb{d}))$ in
$\Convert(\guard{\olsymb{h}(\symb{d},\olsymb{f}(\symb{d},\guard{\symb{d}},\bot))})$
can be reduced to $\symb{d}$, simulating the step of $\SR(\cRidea)$.

\begin{figure}[t]
{\scriptsize 
\[
    \xymatrix@R=7pt@C=1.9pt{
        \guard{\olsymb{g}(\underline{\olsymb{f}(\olsymb{a},\bot,\bot)})} \arSR[d]
        &
        &
        &
        \symb{g}(\underline{\symb{f}(\symb{a})}) \ar@{=}[d] \arU[drr]
        &
        &
        \\
        \guard{\olsymb{g}(\olsymb{f}(\olsymb{a},\guard{\underline{\olsymb{a}}},\bot))} \arSR[d]
        &
        &
        &
        \symb{g}(\symb{f}(\underline{\symb{a}})) \ar@{=}[d] 
        &
        &
        \symb{g}(\UfC(\symb{a},\underline{\symb{a}})) \arU[d]
        &
        \\
        \guard{\olsymb{g}(\olsymb{f}(\olsymb{a},\underline{\guard{\guard{\symb{c}}}},\bot))} \arSR[d]
        &
        &
        &
        \symb{g}(\symb{f}(\symb{a})) \ar@{=}[d] 
        &
        &
        \symb{g}(\UfC(\symb{c},\symb{a})) \ar@{=}[d]
        &
        \\
        \guard{\olsymb{g}(\olsymb{f}(\underline{\olsymb{a}},\guard{\symb{c}},\bot))} \arSR[d]
        &
        &
        &
        \symb{g}(\symb{f}(\underline{\symb{a}})) \arUl[dl] \arUs[dd] \arUs[ddr]
        &
        &
        \symb{g}(\UfC(\symb{c},\underline{\symb{a}})) \arU[d] 
        &
        \\
        \guard{\underline{\olsymb{g}(\olsymb{f}(\guard{\symb{d}},\guard{\symb{c}},\bot))}} \arSR[d]
        &
        &
        \underline{\symb{g}(\symb{f}(\symb{d}))} \arU 
        &
        &
        &
        \underline{\symb{g}(\UfC(\symb{c},\symb{d}))} \arU[d] 
        &
        \\
        \guard{\guard{\olsymb{h}(\olsymb{f}(\guard{\symb{d}},\guard{\symb{c}},\bot),\underline{\olsymb{f}(\guard{\symb{d}},\guard{\symb{c}},\bot)})}} \arSR[d]
        &
        &
        \symb{h}(\symb{f}(\symb{d}),\underline{\symb{f}(\symb{d})}) \arUl[dl] \ar@{=}[d]
        &
        \symb{h}(\symb{f}(\symb{d}),\UfC(\symb{c},\symb{d})) \ar@{=}[d]
        &
        \symb{h}(\UfC(\symb{c},\symb{d}),\underline{\symb{f}(\symb{d})}) \arU[dr] \ar@{=}[d]
        &
        \symb{h}(\UfC(\symb{c},\symb{d}),\UfC(\symb{c},\symb{d})) \arU[dr]
        &
        \\
        \guard{\guard{\olsymb{h}(\olsymb{f}(\guard{\symb{d}},\guard{\symb{c}},\bot),\olsymb{f}(\guard{\symb{d}},\guard{\symb{c}},\underline{\guard{\guard{\symb{d}}}}))}} \arSR[d]
        &
        \symb{h}(\symb{f}(\symb{d}),\UfD(\symb{d},\symb{d})) \ar@{=}[d]
        &
        \symb{h}(\symb{f}(\symb{d}),\symb{f}(\symb{d})) \ar@{=}[d] 
        &
        \symb{h}(\symb{f}(\symb{d}),\UfC(\symb{c},\symb{d})) \ar@{=}[d]
        &
        \symb{h}(\UfC(\symb{c},\symb{d}),\symb{f}(\symb{d})) \ar@{=}[d] 
        &
        \symb{h}(\UfC(\symb{c},\symb{d}),\UfD(\symb{d},\symb{d})) \ar@{=}[d]
        &
        \symb{h}(\ldots) \ar@{=}[d] 
        \\
        \guard{\guard{\olsymb{h}(\olsymb{f}(\guard{\symb{d}},\guard{\symb{c}},\bot),\underline{\olsymb{f}(\guard{\symb{d}},\guard{\symb{c}},\guard{\symb{d}})})}} \arSR[d]
        &
        \symb{h}(\symb{f}(\symb{d}),\underline{\UfD(\symb{d},\symb{d})}) \arU[d]
        &
        \symb{h}(\symb{f}(\symb{d}),\symb{f}(\symb{d})) 
        &
        \symb{h}(\symb{f}(\symb{d}),\UfC(\symb{c},\symb{d}))
        &
        \symb{h}(\UfC(\symb{c},\symb{d}),\symb{f}(\symb{d})) 
        &
        \symb{h}(\UfC(\symb{c},\symb{d}),\underline{\UfD(\symb{d},\symb{d})}) \arU
        &
        \symb{h}(\ldots)
        \\
        \guard{\guard{\olsymb{h}(\olsymb{f}(\guard{\symb{d}},\guard{\symb{c}},\bot),\underline{\guard{\symb{d}}})}} \arSR[d]
        &
        \symb{h}(\symb{f}(\symb{d}),\symb{d}) \ar@{=}[d]
        &
        &
        &
        &
        \symb{h}(\UfC(\symb{c},\symb{d}),\symb{d}) \ar@{=}[d]
        &
        \\
        \guard{\guard{\guard{\olsymb{h}(\underline{\olsymb{f}(\guard{\symb{d}},\guard{\symb{c}},\bot)},\symb{d})}}} \arSR[d]
        &
        \symb{h}(\symb{f}(\symb{d}),\symb{d})
        &
        &
        &
        &
        \symb{h}(\underline{\UfC(\symb{c},\symb{d})},\symb{d}) \arU
        &
        \\
        \guard{\guard{\guard{\olsymb{h}(\underline{\guard{\symb{c}}},\symb{d})}}} \arSR[d]
        &
        &
        &
        &
        &
        \symb{h}(\symb{c},\symb{d}) \ar@{=}[d]
        &
        \\
        \guard{\guard{\guard{\guard{\underline{\olsymb{h}(\symb{c},\symb{d})}}}}} \arSR[d]
        &
        &
        &
        &
        &
        \underline{\symb{h}(\symb{c},\symb{d})} \arU
        &
        \\
        \guard{\guard{\guard{\guard{\guard{\symb{c}}}}}}
        &
        &
        &
        &
        &
        \symb{c}
        &
        \\
    }
\]
}
\vspace{-20pt}
    \caption{another derivation of $\SR(\cRidea)$ and its corresponding computation tree (DAG) of $\U(\cRidea)$.}    
    \label{fig:idea2}
\end{figure}
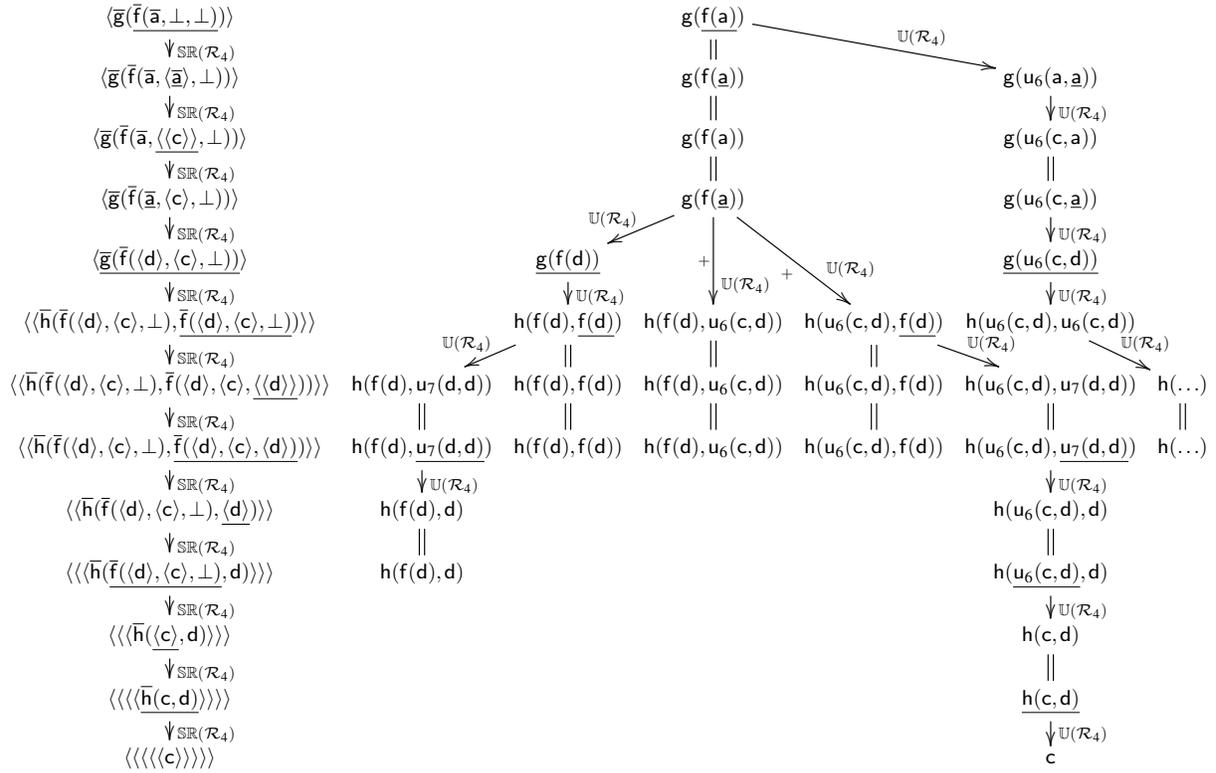

Let us get back to the case where we apply non-right-linear rules in
$\SR(\cR)\setminus \cRaux$. 
Figure~\ref{fig:idea2} illustrates a derivation of $\SR(\cRidea)$ and
its corresponding computation tree of $\U(\cRidea)$, where
non-right-linear rules are applied. 
In applying non-right-linear rules in $\SR(\cR)\setminus \cRaux$ to $s$
with $s \mathrel{\to_{l \to r \in \SR(\cR)\setminus \cRaux}} t$, it is
not only difficult but also sometimes impossible to show that for each
$t' \in \Convert(t)$, there exists a term $s' \in \Convert(s)$ such that
$s' \mathrel{\to^*_{\U(\cR)}} t'$. 
For example,
$\guard{\olsymb{g}(\olsymb{f}(\guard{\symb{d}},\guard{\symb{c}},\bot))}$
in Figure~\ref{fig:idea2} is reduced by $\SR(\cRidea)$ to
$\guard{\guard{\olsymb{h}(\olsymb{f}(\guard{\symb{d}},\guard{\symb{c}},\bot),\olsymb{f}(\guard{\symb{d}},\guard{\symb{c}},\bot))}}$,
and represents two terms in $T(\U(\cF),\cV)$:
$\Convert(\guard{\olsymb{g}(\olsymb{f}(\guard{\symb{d}},\guard{\symb{c}},\bot))})
= \{~ 
\symb{g}(\symb{f}(\symb{d})),~
\symb{g}(\UfC(\symb{c},\symb{d}))
~\}$.
Though, no term in
$\Convert(\guard{\olsymb{g}(\olsymb{f}(\guard{\symb{d}},\guard{\symb{c}},\bot))})$
is reduced by a single step of $\to_{\SR(\cRidea)}$ to either
$\symb{h}(\symb{f}(\symb{d}),\UfC(\symb{c},\symb{d}))$ or
$\symb{h}(\UfC(\symb{c},\symb{d}),\symb{f}(\symb{d}))$ in
$\Convert(\guard{\guard{\olsymb{h}(\olsymb{f}(\guard{\symb{d}},\guard{\symb{c}},\bot),\olsymb{f}(\guard{\symb{d}},\guard{\symb{c}},\bot))}})$. 
However, $\symb{g}(\symb{f}(\symb{a})) \in
\Convert(\guard{\olsymb{g}(\olsymb{f}(\symb{a},\guard{\symb{c}},\bot))})$
can be reduced to both
$\symb{h}(\symb{f}(\symb{d}),\UfC(\symb{c},\symb{d}))$ and
$\symb{h}(\UfC(\symb{c},\symb{d}),\symb{f}(\symb{d}))$ including rewrite
steps of $\symb{g}(x) \to \symb{h}(x,x) \in \U(\cRidea)$ corresponding
to $\olsymb{g}(x) \to \guard{\olsymb{h}(x,x)} \in \SR(\cRidea)$.
The existence of such reduction sequences represented by
$\to^+_{\U(\cRidea)}$-edges of the DAG in Figure~\ref{fig:idea2} is
ensured by  the reduction
$\guard{\olsymb{g}(\olsymb{f}(\olsymb{a},\guard{\symb{c}},\bot))}
\mathrel{\to^*_{\SR(\cRidea)}}
\guard{\guard{\olsymb{h}(\olsymb{f}(\guard{\symb{d}},\guard{\symb{c}},\bot),\olsymb{f}(\guard{\symb{d}},\guard{\symb{c}},\bot))}}$. 
In summary, for $s_0 \mathrel{\to^*_{\SR(\cR)}} s \mathrel{\to_{l \to r
\in \SR(\cR)\setminus \cRaux}} t$, we will show the existence of a
derivation $s_0 \mathrel{\to^*_{\SR(\cR)}} s' \mathrel{\to_{l \to r \in
\SR(\cR)\setminus\cRaux}} t' \mathrel{\to^*_{\SR(\cR)}} t$ such that for
each term $t'' \in \Convert(t')$, $s'' \mathrel{\to_{l' \to r' \in
\U(\cR)}} t''$ for some term $s'' \in \Convert(s')$, where $l' \to r'$
corresponds to $l \to r$ (Lemmas \ref{lemma:SR-properties-2}
and~\ref{lemma:SR-properties-3}).

Before showing the key lemmas (Lemmas \ref{lemma:SR-properties-2}
and~\ref{lemma:SR-properties-3}), we show some auxiliary lemmas along
the intuition above. 
The following lemma says that if a term $t$ is reduced and matches a
linear pattern obtained from $\cR$ by applying $\Extension$, then the 
initial term $t$ also matches the pattern. 
\begin{lemma}
 \label{lemma:SR-properties}
 Let $\cR$ be a $\U$-WLL DCTRS, $t$ be a reachable term in
 $T(\ol{\cF},\cV)$, $w$ be a linear term in $T(\cF,\cV)$,
 $w'=\Extension(w)$ (i.e., $w' \in T(\ol{\cF}\setminus \{ \guard{},
 \bot, [~]^\rho_j \},\cV)$),%
 \footnote{ Patterns
 $w'_1,\ldots,w'_n,\Extension(t_1),\ldots,\Extension(t_k)$ in
 Definition~\ref{def:SR} are in
 $T(\ol{\cF}/\{\guard{},\bot,[~]^\rho_j\},\cV)$.}
 and
 $\theta$ be a substitution in $\Subst(\ol{\cF},\cV)$. 
 If $t \mathrel{\to^*_{\SR(\cR)}} w'\theta$, then there exists a
 substitution $\sigma$ such that $t=w'\sigma$ and $x\sigma
 \mathrel{\to^*_{\SR(\cR)}} x\theta$ for all variables $x\in\Var(w)$.
\end{lemma}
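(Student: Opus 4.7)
The plan is to proceed by induction on the length $n$ of the reduction $t \mathrel{\to^n_{\SR(\cR)}} w'\theta$. The base case $n=0$ is immediate with $\sigma := \theta$. For the inductive step, split off the first step as $t \mathrel{\to_{\SR(\cR)}} t_1 \mathrel{\to^{n-1}_{\SR(\cR)}} w'\theta$ via some rule $l \to r$ applied at position $p$ with substitution $\tau$; applying the induction hypothesis to the tail yields $\sigma_1$ with $t_1 = w'\sigma_1$ and $x\sigma_1 \mathrel{\to^*_{\SR(\cR)}} x\theta$ for every $x \in \Var(w)$. It remains to lift the first step back to an equality $t = w'\sigma$ for some $\sigma$ still satisfying the reduction condition on $\Var(w)$. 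The analysis splits on the location of $p$ relative to the linear skeleton of $w'$.

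If $p$ is at or below some variable position $q$ of $w'$, then by linearity of $w'$---which follows from linearity of $w$ together with the distinct fresh variables inserted by $\Extension$ into extended argument slots---such $q$ is unique. Write $x := w'|_q$ and define $\sigma$ to agree with $\sigma_1$ everywhere except on $x$, which is mapped to $t|_q$. Since $x$ occurs only at $q$ in $w'$, we obtain $t = w'\sigma$. If $x \in \Var(w)$, the reduction condition holds via $x\sigma = t|_q \mathrel{\to_{\SR(\cR)}} x\sigma_1 \mathrel{\to^*_{\SR(\cR)}} x\theta$ by the induction hypothesis; if $x$ is a fresh $\Extension$-variable, no condition needs to be checked.

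If $p$ is a non-variable position of $w'$, then $\Root(w'|_p)$ is an overlined constructor or an overlined defined symbol, since $w' = \Extension(w)$. Matching $t_1|_p = r\tau$ against $w'\sigma_1|_p$ forces $\Root(r) = \Root(w'|_p)$. A case inspection of every rule family of $\SR(\cR)$ shows that the root of any right-hand side is either $\guard{\cdot}$ (the three families in $\cRaux$, the final rule of each $\SR(\rho)$, and the rules obtained from unconditional rules of $\cR$) or an overlined defined symbol (the intermediate transition rules of $\SR(\rho)$); no rule in $\SR(\cR)$ produces an overlined constructor at the root. Hence the only remaining possibility is that $\Root(w'|_p)$ is an overlined defined symbol $\olsymb{f}$ and $l \to r$ is an intermediate transition rule of some $\SR(\rho)$. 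Such a rule alters only the single extended-argument position $n+i$ while the first $n$ coordinates and the other extended coordinates of $l$ and $r$ coincide. Define $\sigma$ to agree with $\sigma_1$ everywhere except on the fresh $\Extension$-variable $z_i$ sitting at the $i$-th extended position of $w'|_p$, and set $\sigma(z_i) := l\tau|_{n+i}$. Then reading off $r\tau = w'\sigma_1|_p$ coordinate by coordinate shows that the first $n$ coordinates of $l\tau$ agree with those of $w'\sigma|_p$, and the remaining extended coordinates agree as well, so $w'\sigma|_p = l\tau = t|_p$, giving $t = w'\sigma$. Since $z_i \notin \Var(w)$, the reduction condition on $\Var(w)$ transfers from $\sigma_1$ unchanged.

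The main obstacle lies in this second case: one must conduct the exhaustive inspection of right-hand-side root symbols to exclude every rule family except the intermediate transition rules of $\SR(\rho)$, and then carry out the coordinate bookkeeping exploiting the fact that these rules touch only one extended argument, so that $l\tau$ matches $w'\sigma|_p$ throughout rather than only at the root.
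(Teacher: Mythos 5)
Your proof is correct and takes essentially the same route as the paper: reduce the multi-step statement to lifting a single $\SR(\cR)$-step through the linear pattern $w'$, which the paper handles by structural induction on $w$ and you handle by the equivalent case analysis on the redex position relative to $w'$. The key observations you make---that every right-hand side of $\SR(\cR)$ is rooted by $\guard{\cdot}$ or by an overlined defined symbol, and that the intermediate rules of $\SR(\rho)$ modify only one extended argument---are exactly what makes the paper's induction go through as well.
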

\begin{proofsketch}
 It suffices to show that if $t \mathrel{\to_{p,l \to r \in \SR(\cR)}}
 w'\theta$, then there exists a substitution $\sigma$ such that
 $t=w'\sigma$ and $x\sigma \mathrel{\to^*_{\SR(\cR)}} x\theta$ for all
 variables $x\in\Var(w)$.
 This claim can be proved by structural induction on $w$.
\qed
\end{proofsketch}

For the sake of readability, we introduce a binary relation $\phito$
over $T(\ol{\cF},\cV)$:
$s \mathrel{\phito} t$ if and only if for each $t' \in \Convert(t)$,
there exists a term $s'\in \Convert(s)$ such that $s'
\mathrel{\to^*_{\U(\cR)}} t'$.
It is clear that $\phito$ is reflexive and transitive. 
The relation $\phito$ is closed under contexts.
\begin{lemma}
 \label{lem:phito_closed-under-contexts}
 Let $\cR$ be a $\U$-WLL DCTRS, $C[~]$ be a context, and $s$ and $t$ be
 terms in $T(\ol{\cF},\cV)$ such that $s \mathrel{\to^*_{\SR(\cR)}} t$
 and $s \mathrel{\phito} t$. 
 Then, $C[s] \mathrel{\phito} C[t]$.
\end{lemma}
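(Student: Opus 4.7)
\begin{proofsketch}
The plan is to proceed by well-founded induction on the size $|C|$ of the context. The base case $C = \Box$ is exactly the hypothesis $s \phito t$. For the inductive step I will split on the root symbol of $C$: when that root is $\guard{\cdot}$ or a constructor $\olsymb{c} \in \cC_\cR$, Definition~\ref{def:Convert} makes $\Convert$ transparent or homomorphic at the top, so every target $u \in \Convert(C[t])$ factors as an outer shell (common to $C[s]$ and $C[t]$) wrapping an inner component that lies in $\Convert(C'[t])$ for the strictly smaller sub-context $C'$ containing the hole; the inductive hypothesis on $C'$ then yields a pre-image in $\Convert(C'[s])$, and the outer shell lifts the $\to^*_{\U(\cR)}$-reduction back up.

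The principal case is $C = \olsymb{f}(v_1,\ldots,v_n,u_1,\ldots,u_m)$ with $\symb{f}/n \in \cD_\cR$. The subcase where the hole lies inside an extended argument $u_j$ is handled analogously by descending through the $[~]^\rho_j$-layer, so I focus on the hole lying in some regular argument $v_i = C'$. Then Definition~\ref{def:Convert} splits $\Convert(C[t])$ into an ``all-$\bot$'' contribution, discharged directly by the IH on $C'$, and one $\Auxconvert$ contribution per index $j_0$ with $u_{j_0}\ne\bot$. In the latter case, tied to some conditional rule $\rho\colon \symb{f}(\MyVec{w}{1}{n}) \to r \Leftarrow \Condition{s_1}{t_1},\ldots,\Condition{s_k}{t_k}$, the target has the form $\Usymb^\rho_{j_0}(u',\MyVec{u'}{1}{|X_{j_0}|})$ witnessed by some matching $\sigma$ with $\sigma(\MapVec{\Extension}{w}{1}{n}) = v_1,\ldots,C'[t],\ldots,v_n$. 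Since $\cR$ is $\U$-WLL the pattern $\Extension(w_i)$ is linear, so I will apply Lemma~\ref{lemma:SR-properties} to $C'[s] \to^*_{\SR(\cR)} C'[t] = \Extension(w_i)\sigma$ to obtain a substitution $\sigma''_i$ with $C'[s] = \Extension(w_i)\sigma''_i$ and $x\sigma''_i \to^*_{\SR(\cR)} x\sigma$ for every $x \in \Var(w_i)$. Gluing $\sigma''_i$ with $\sigma$ on the variables of $X_{j_0}$ outside $\Var(w_i)$ --- consistent because the $\U$-WLL linearity of $l,t_1,\ldots,t_{k-1}$ forces $\Var(w_i)$ and $V_{j_0}$ to be disjoint --- then builds a matching $\sigma''$ and a candidate pre-image $\Usymb^\rho_{j_0}(u',\MyVec{u''}{1}{|X_{j_0}|}) \in \Convert(C[s])$.

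Closing the argument requires upgrading ``$x\sigma''_i \to^*_{\SR(\cR)} x\sigma$'' to ``$\sigma''_i(x) \phito \sigma(x)$'' for each $x \in \Var(w_i)$, which is what forces the candidate to reduce componentwise to the target under $\to^*_{\U(\cR)}$. My plan is to compare the position $p_x$ of $x$ in $\Extension(w_i)$ with the hole position $q$ of $\Box$ inside $C'$: when $p_x$ and $q$ are incomparable the two substitutes coincide; when $p_x = q$ the hypothesis $s \phito t$ applies directly; and when $p_x < q$ the substitute factorises as $E[s]$ versus $E[t]$ for a context $E$ with $|E| < |C|$, so the outer IH closes the case. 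The delicate remaining case is $p_x > q$, where the pattern $\Extension(w_i)$ penetrates below the hole into the interior of $s$ so that the substitutes are subterms $s|_{p'},t|_{p'}$ at a common non-trivial position; here the match forces $\Root(s) = \Root(t)$, and I will discharge this by a side induction on the pattern depth that alternately peels constructor layers (where Definition~\ref{def:Convert} transports $\phito$ directly to child subterms) and defined-symbol layers (which reduces to a smaller instance of the same matching analysis on a shorter residual pattern). The main obstacle I expect is coordinating this nested induction with the outer induction on $|C|$ so that both the context size and the residual pattern depth strictly decrease at every recursive call, while simultaneously checking that the extended matching $\sigma''$ remains consistent across all variables of $X_{j_0}$ used by $\Convert$.
\qed
\end{proofsketch}
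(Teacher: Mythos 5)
Your overall skeleton --- structural induction on the context, a case split driven by the clauses of Definition~\ref{def:Convert}, and an appeal to Lemma~\ref{lemma:SR-properties} to re-match the linear pattern $\Extension(w_i)$ against the un-reduced argument $C'[s]$ --- is exactly the route the paper indicates in its proof sketch, and your treatment of the $\guard{\cdot}$/constructor cases, of the ``all-$\bot$'' contribution, and of the variable positions $p_x$ that are incomparable to, equal to, or above the hole is sound (including the observation that $\U$-WLL makes $\Var(w_i)$ and $V_{j_0}$ disjoint, so the glued matching is well defined).

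There is, however, a genuine gap in your treatment of the case $p_x > q$, where the rigid part of $\Extension(w_i)$ penetrates below the hole so that the two substitutes are $s|_{p'}$ and $t|_{p'}$. Your side induction needs, at each defined-symbol layer, an implication of the form ``$\olsymb{g}(s_1,\ldots) \mathrel{\phito} \olsymb{g}(t_1,\ldots)$ implies $s_1 \mathrel{\phito} t_1$'', but this does not follow from the definition of $\phito$: given a target $t'_1 \in \Convert(t_1)$, the hypothesis only supplies \emph{some} ancestor $s^{*} \in \Convert(\olsymb{g}(s_1,\ldots))$ of $\symb{g}(t'_1)$, and since $\symb{g}$ is defined, $s^{*}$ may be one of the $\Usymb^{\rho}_{j}$-shaped elements of $\Convert$, or the witnessing $\U(\cR)$-derivation $s^{*} \mathrel{\to^*_{\U(\cR)}} \symb{g}(t'_1)$ may contain root steps that leave and re-enter the symbol $\symb{g}$; in either situation the derivation does not project onto the first argument, so no $s'_1 \in \Convert(s_1)$ with $s'_1 \mathrel{\to^*_{\U(\cR)}} t'_1$ is obtained. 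Re-running ``the same matching analysis on a shorter residual pattern'' does not repair this, because Lemma~\ref{lemma:SR-properties} only ever returns the relation $\to^*_{\SR(\cR)}$ between the substitutes, and converting \emph{that} into $\phito$ is precisely what the lemma under proof (or the later Lemma~\ref{lem:Un-simulation}, which depends on it) is for --- so the recursion is circular. The hypothesis $s \mathrel{\phito} t$ speaks only about the full hole contents and does not automatically transfer to proper subterms of $s$ and $t$ at a common rigid position; to close this case you need either a separately proved projection property of $\phito$ along patterns whose rigid symbols cannot be ``re-created'' by $\U(\cR)$-root steps, or a strengthened induction statement that carries the required subterm-level $\phito$ facts along with the context induction.
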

\begin{proofsketch}
Using the definition of $\Convert$ and Lemma~\ref{lemma:SR-properties},
 this lemma can be proved by structural induction on $C[~]$.
\qed
\end{proofsketch}
The following lemma is a variant of
Lemma~\ref{lem:phito_closed-under-contexts} that is useful to prove the
main key lemma shown later.
\begin{lemma}
 \label{lem:phito_closed-under-substitutions}
 Let $\cR$ be a $\U$-WLL DCTRS, $t$ be a term in $T(\ol{\cF},\cV)$, and
 $\sigma$ and $\theta$ be substitutions such that $x\sigma
 \mathrel{\to^*_{\SR(\cR)}} x\theta$ and $x\sigma \mathrel{\phito}
 x\theta$ for all variables $x \in \Var(t)$. 
 Then, $t\sigma \mathrel{\phito} t\theta$.
\end{lemma}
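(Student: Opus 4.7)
The plan is to prove the claim by structural induction on $t$, following the same pattern used for Lemma~\ref{lem:phito_closed-under-contexts} but with finitely many variable positions playing the role of the single context hole. The base cases are immediate: if $t$ is a constant then $t\sigma = t = t\theta$ and $\phito$ is reflexive, while if $t = x \in \cV$ then both halves of the hypothesis match the conclusion directly, so $t\sigma = x\sigma \phito x\theta = t\theta$.

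For the inductive step, first dispatch the easy headers. If $t = \olsymb{c}(\MyVec{t}{1}{n})$ with $\symb{c}/n \in \cC_\cR$, or $t = \guard{t'}$, then by Definition~\ref{def:Convert} the set $\Convert(t\theta)$ decomposes elementwise over the immediate subterms: every $u' \in \Convert(t\theta)$ splits as $\symb{c}(u'_1,\ldots,u'_n)$ with $u'_i \in \Convert(t_i\theta)$ (respectively as an element of $\Convert(t'\theta)$). Applying the induction hypothesis to each subterm yields preimages $u_i \in \Convert(t_i\sigma)$ with $u_i \mathrel{\to^*_{\U(\cR)}} u'_i$, and reassembling produces the required $u \in \Convert(t\sigma)$ with $u \mathrel{\to^*_{\U(\cR)}} u'$.

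The main obstacle will be the defined-symbol case $t = \olsymb{f}(\MyVec{t}{1}{n},\MyVec{v}{1}{m})$ with $\symb{f}/n \in \cD_\cR$, because here $\Convert(t\theta)$ is a union of contributions from $\Auxconvert$: the $\bot$-only contribution, which is handled exactly as in the constructor case above, together with one contribution per index $i$ for which $v_i\theta$ has the form $[\guard{s'},\overrightarrow{w'}]^\rho_j$. For a $u'$ coming from the $i$-th branch I plan to first use the auxiliary premise $x\sigma \mathrel{\to^*_{\SR(\cR)}} x\theta$ together with Lemma~\ref{lemma:SR-properties} applied to the (linear) bracket pattern to conclude that $v_i\sigma$ already carries the same $[~]^\rho_j$-shape, so that the corresponding branch of $\Auxconvert$ is active on the $\sigma$-side as well; then the induction hypothesis applied to the guarded term and to the stored witness entries inside the bracket, combined with the IH on $\MyVec{t}{1}{n}$, yields a preimage $u \in \Convert(t\sigma)$ with $u \mathrel{\to^*_{\U(\cR)}} u'$. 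The hardest part is precisely this branch alignment: without the $\to^*_{\SR(\cR)}$ hypothesis a variable occupying an extended-argument slot could select incompatible branches of $\Auxconvert$ on the two sides and block the construction; once Lemma~\ref{lemma:SR-properties} locks the bracket shapes together, everything else is a routine reassembly of the IH witnesses.
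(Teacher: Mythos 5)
Your overall strategy differs from the paper's. The paper obtains this lemma as a two-line corollary of Lemma~\ref{lem:phito_closed-under-contexts}: extend that lemma to contexts with several holes (by filling the holes one at a time and using transitivity of $\phito$), then view the variable occurrences of $t$ as holes, so that $t\sigma$ and $t\theta$ are the same multi-hole context filled with $x\sigma$ and $x\theta$ respectively. That reduction pushes the entire case analysis you describe into the already-established context lemma, so nothing about $\Auxconvert$, brackets, or extended arguments has to be revisited here. Your plan instead redoes the structural induction from scratch, which is legitimate in principle but forces you to re-prove inline the hard defined-symbol case that Lemma~\ref{lem:phito_closed-under-contexts} was supposed to have settled.

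It is exactly in that case that your argument has a gap. Lemma~\ref{lemma:SR-properties} is stated only for patterns $w'=\Extension(w)$ with $w$ a linear term over the original signature $\cF$; its pattern language explicitly excludes $\guard{\cdot}$, $\bot$ and the tuple symbols $[~]^\rho_j$, so you cannot ``apply it to the bracket pattern'' to conclude that $v_i\sigma$ carries the same $[~]^\rho_j$-shape as $v_i\theta$. That shape-preservation fact is true (no left- or right-hand side of $\SR(\cR)$ is rooted at a tuple symbol, so a $[~]^\rho_j$-rooted reduct can only descend from a $[~]^\rho_j$-rooted term), but it needs its own argument rather than a citation of Lemma~\ref{lemma:SR-properties}. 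Conversely, you omit the place where that lemma genuinely is needed: for the $i$-th branch of $\Auxconvert$ to be defined on the $\sigma$-side at all, the first $n$ arguments $\MapVec{\sigma}{t}{1}{n}$ must still match the linear pattern $\MapVec{\Extension}{w}{1}{n}$ of the $i$-th conditional rule, and this is what Lemma~\ref{lemma:SR-properties} extracts from $t_l\sigma \mathrel{\to^*_{\SR(\cR)}} t_l\theta$. Finally, when $v_i$ is a variable instantiated by a bracket term, the hypothesis $v_i\sigma \mathrel{\phito} v_i\theta$ does not directly yield the component-wise $\to^*_{\U(\cR)}$-ancestors you need for the guarded content and the stored witnesses, because $\Convert$ on $[\ldots]^\rho_j$ returns sets of term sequences while $\phito$ is defined via sets of terms; this extraction has to be spelled out. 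All of these complications are avoided by routing the proof through Lemma~\ref{lem:phito_closed-under-contexts} as the paper does.
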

\begin{proofsketch}
 It is easy to extend Lemma~\ref{lem:phito_closed-under-contexts} to
 contexts with multiple holes. 
 Thus, this lemma is a direct consequence of the extended lemma since a
 linear term can be considered a context with multiple holes.
\qed    
\end{proofsketch}

When $l \to r \in \SR(\cR)$ has a variable $x$ such that $|r|_x > 1$ and
$|\Convert(x\theta)|>1$, we have at least two terms obtained by
converting $x\theta$, and thus, $\Convert(r\sigma)$ contains a term that
has no ancestor in $\Convert(l\theta)$ w.r.t.\ $\to_{\U(\cR)}$.
This problem does not happen if $\Convert(\theta|_{\Var(r)})$ is a
singleton set.
\begin{lemma}
 \label{lemma:SR-properties-2}
 Let $\cR$ be a $\U$-WLL DCTRS, $l \to r \in \SR(\cR)$, and $\sigma$ be
 a substitution such that for any variable $x \in \Var(r)$, if\/ $|r|_x
 > 1$ and $x\sigma\ne \bot$ then $|\Convert(x\sigma)|=1$.
 Then, $l\sigma \mathrel{\phito} r\sigma$.
\end{lemma}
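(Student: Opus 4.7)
\begin{proofsketch}
The plan is a case analysis on the rule $l \to r \in \SR(\cR)$: (i) $l \to r \in \cRaux$; (ii) $l = \Extension(l_0)$ and $r = \guard{\ol{r_0}}$ coming from an unconditional rule $l_0 \to r_0 \in \cR$; and, for the $i$-th conditional rule $\rho: \symb{f}(\MyVec{w}{1}{n}) \to r_\rho \Leftarrow \Condition{s_1}{t_1}, \ldots, \Condition{s_k}{t_k}$ of $\symb{f}$, (iii)--(v) the starting, middle, and closing transformed rules of $\rho$. Throughout I fix $t' \in \Convert(r\sigma)$ and exhibit $s' \in \Convert(l\sigma)$ with $s' \mathrel{\to^*_{\U(\cR)}} t'$. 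For (i), unfolding the definitions of $\Convert$ and $\Auxconvert$ on each of the three auxiliary rules gives $\Convert(l\sigma) \supseteq \Convert(r\sigma)$, so $s' = t'$ with zero $\U(\cR)$-steps suffices.

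For (iii) and (iv), $r$ has the form $\olsymb{f}(\ldots)$, so $\Convert(r\sigma)$ is a union of $\Auxconvert$-branches. When $t'$ comes from a branch unrelated to the slot touched by the rule (one unfolding an extended argument $z_l$ with $l \ne i$, or using the all-$\bot$ padding), the first $n$ arguments of the top $\olsymb{f}$-subterm are shared between $l\sigma$ and $r\sigma$, so $t' \in \Convert(l\sigma)$ and $s' = t'$ works. Otherwise $t'$ has the form $\Usymb^\rho_j(u', \MyVec{u'}{1}{|X_j|})$ with $u' \in \Convert(\ol{s_j}\sigma)$ and each $u'_p$ drawn from $\Convert(x_p\sigma)$ for $x_p$ the $p$-th variable in $\overrightarrow{X_j}$. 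For (ii) and (v), $r$ is $\guard{\ol{r_0}}$ or $\guard{\ol{r_\rho}}$, and $t' = r_0\sigma'_r$ or $r_\rho\sigma'_r$ with $x\sigma'_r \in \Convert(x\sigma)$ for the relevant variables. In every such subcase I assemble a substitution $\sigma'$ on the variables of the $\U(\cR)$-rule corresponding to $l \to r$ by reading the $X_j$-components off of $\MyVec{u'}{1}{|X_j|}$ (or taking $\sigma'_r$ for (ii), (v)) and, on variables appearing inside $\ol{s_j}$, $\ol{r_\rho}$, or $\ol{r_0}$ whose choices must be consistent with the $X_j$-part, taking the unique element of $\Convert(x\sigma)$ granted by the hypothesis. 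Lemma~\ref{lem:Convert-properties}(c),(d) then shows that the left-hand side of the $\U$-rule under $\sigma'$ lies in $\Convert(l\sigma)$, and a single $\U(\cR)$-step sends it to $t'$.

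The main obstacle is to verify that the hypothesis indeed forces $|\Convert(x\sigma)| = 1$ for every variable pushed into the unique-element branch, so that the two defining clauses of $\sigma'$ agree. By Theorem~\ref{thm:characterization_of_U-WLL}(a) the sequence $l_0, t_1, \ldots, t_{k-1}$ is linear and by determinism of $\rho$, $\Var(s_j) \subseteq X_j$; hence every $x \in \Var(s_j)$ occurs once in either $\MyVec{w}{1}{n}$ or in the $\overrightarrow{V_j}$-listing of $[~]_j^\rho$ in $r$, and again in the embedded $\ol{s_j}$, giving $|r|_x \ge 2$. A symmetric count using $\Var(r_\rho) \subseteq \Var(l_0, t_1, \ldots, t_k)$ together with Theorem~\ref{thm:characterization_of_U-WLL}(b), $\Numv{l_0, t_1, \ldots, t_k}{x} \le 1$ for $x \in \Var(r_\rho)$, handles (v); (ii) follows from WLLness of the unconditional rule $l_0 \to r_0$. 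Once these singletons are secured, $\sigma'$ is well-defined and consistent, and the construction closes.
\qed
\end{proofsketch}
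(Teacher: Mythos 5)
Your proposal is correct and follows essentially the same route as the paper's (one-sentence) proof sketch: a case distinction on the form of $l \to r$, unfolding the definitions of $\Convert$ and $\Auxconvert$ and invoking Lemma~\ref{lem:Convert-properties}. Your additional counting argument---using linearity of $l_0,t_1,\ldots,t_{k-1}$ from Theorem~\ref{thm:characterization_of_U-WLL}(a), determinism, and condition (b) to show that exactly the variables needing a consistent converted value satisfy $|r|_x>1$ and hence fall under the hypothesis---is precisely the content the paper leaves implicit.
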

\begin{proofsketch}
 Referring to the definition of $\Convert$ and
 Lemma~\ref{lem:Convert-properties}, this lemma can be proved by a case
 distinction depending on what $l \to r$ is.
 \qed
\end{proofsketch}

For a derivation $s \mathrel{\to^*_{\SR(\cR)}} t\theta$, the following
lemma ensures the existence of an ancestor for a variable $x$ in $t$
such that $|t|_x = 1$ and $|\Convert(x\theta)|>1$.
\begin{lemma}
 \label{lemma:SR-properties-3}
 Let $\cR$ be a $\U$-WLL DCTRS, $s$ and $t$ be terms in
 $T(\ol{\cF},\cV)$, $\theta$ be a substitution in
 $\Subst(\ol{\cF},\cV)$, and $X\subseteq\{ x \in \Var(t) \mid |t|_x =1
 \}$. 
 If\/ $|\Convert(s)|=1$ and $s \mathrel{\to_{\SR(\cR)}^d} t\theta$ ($d
 \geq 0$), then there exist a substitution $\delta \in
 \Subst(\ol{\cF},\cV)$ and natural numbers $d'$ and $d_x$ for $x\in X$
 such that 
 \begin{enumerate}
  \renewcommand{\labelenumi}{(\alph{enumi})}
  \item $d'+\sum_{x \in X} d_x \leq d$, 
  \item $s \mathrel{\to^{d'}_{\SR(\cR)}} t\delta$, 
  \item $|\Convert(x\delta)|=1$ and $x\delta
	\mathrel{\to^{d_x}_{\SR(\cR)}} x\theta$ for all variables $x \in
	X$ such that $x\theta\ne \bot$, and 
  \item $x\delta=x\theta$ for all variables $x \in \Var(t) \setminus \{
	y \in X \mid y\theta\ne \bot \}$.%
	\footnote{ Note that if $x\theta=\bot$, then $x\delta=\bot$.}
 \end{enumerate}
\end{lemma}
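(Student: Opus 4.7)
\begin{proofsketch}
The plan is to proceed by induction on $d$.

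For the base case $d = 0$, I set $\delta = \theta$, $d' = 0$, and $d_x = 0$ for every $x \in X$; conditions (a), (b), and (d) then hold by construction. For condition (c), fix any $x \in X$ with $x\theta \neq \bot$. Since $\Numv{t}{x} = 1$, the variable $x$ occurs at a unique position $p$ in $t$, so $t\theta|_p = x\theta$. Unfolding the recursive definition of $\Convert$ from the root of $t\theta$ down along $p$, one sees that $|\Convert(t\theta)|$ factorises as a product in which $|\Convert(x\theta)|$ appears as one factor and every factor is at least $1$ by Lemma~\ref{lem:Convert-properties}(b). Combined with $|\Convert(t\theta)| = |\Convert(s)| = 1$, this forces $|\Convert(x\theta)| = 1$.

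For the inductive step $d > 0$, I write the derivation as $s \mathrel{\to_{q, l' \to r'}} u \mathrel{\to^{d-1}_{\SR(\cR)}} t\theta$. In the easy sub-case $|\Convert(u)| = 1$, the induction hypothesis applies directly to the shorter derivation $u \mathrel{\to^{d-1}_{\SR(\cR)}} t\theta$, yielding $\delta$, $d'_1$, and $(d_x)_{x \in X}$; prepending the initial step gives $s \mathrel{\to^{d'_1+1}_{\SR(\cR)}} t\delta$ with $d' = d'_1 + 1$ and $d' + \sum_{x \in X} d_x \leq d$. The hard sub-case is $|\Convert(u)| > 1$, which arises principally when the first step applies a ``start'' rule of $\SR(\rho)$ for some conditional rule $\rho$ (the first rule of Definition~\ref{def:SR}), introducing a fresh $[\,\cdot\,]^\rho_1$-subterm at an extended argument position, and similarly when a ``middle'' rule alters such a subterm. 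By Theorem~\ref{thm:characterization_of_U-WLL}(a), the $\U$-WLL property ensures linearity of each of $l, t_1, \ldots, t_{k-1}$, so the newly introduced (or altered) evaluation structure has a single, well-defined residual throughout the remaining $d-1$ steps.

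I then track this residual to its fate: either (i) it is eliminated by a subsequent ``conclude'' rule of $\SR(\rho)$, producing a $\guard{\ol{r}}$-subterm that rejoins the structural part of the derivation, or (ii) it persists inside $t\theta$. In either case, I factor the remaining derivation into the sub-derivation acting on the residual and the sub-derivation outside it, apply the induction hypothesis to each shorter piece, and recombine the resulting substitutions and length bounds, invoking Lemmas~\ref{lem:phito_closed-under-contexts} and~\ref{lem:phito_closed-under-substitutions} to propagate $\Convert$-information across contexts and substitutions. The main obstacle is exactly this hard sub-case: making the length accounting in (a) tight after the split, and verifying that the reassembled $\delta$ simultaneously satisfies $|\Convert(x\delta)| = 1$ for every relevant $x \in X$. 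Without $\U$-WLL, the introduced $[\,\cdot\,]^\rho_1$ could be duplicated, and both the cardinality analysis and the length bookkeeping would fail. \qed
\end{proofsketch}
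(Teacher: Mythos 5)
Your base case and the ``easy'' sub-case of the induction step are sound (the base case works because $|\Convert(\cdot)|$ of a term bounds $|\Convert(\cdot)|$ of each of its relevant subterms from above, even though at defined-symbol nodes $\Convert$ is a union of products rather than a single product). The genuine gap is that the hard sub-case --- which is the entire content of the lemma --- is described but not proved. The conclusion you must reach is the existence of one term $t\delta$ with $s \mathrel{\to^{d'}_{\SR(\cR)}} t\delta$, where $\delta$ agrees with $\theta$ outside $X$ and each $x\delta$ still has a singleton $\Convert$-image while reducing to $x\theta$. That is a derivation-rearrangement (postponement) statement: the steps that introduce condition evaluations \emph{inside the $X$-positions of $t$} must be shown to commute past the steps that establish the pattern $t$ and the non-$X$ parts, and the per-variable tails must then be reassembled into a single $\delta$ satisfying (a)--(d) simultaneously. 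Your sketch never constructs $\delta$, never argues any postponement, and the phrase ``factor the remaining derivation \ldots{} apply the induction hypothesis to each shorter piece \ldots{} recombine'' names the destination without supplying the argument; in particular, the pieces of a positional split are not automatically of the form required by the statement (a derivation from a term with singleton $\Convert$-image into an instance of a pattern), so the IH does not apply to them without further work. Note also that whether the \emph{first} step makes $|\Convert(u)|>1$ is not the right case split: an evaluation introduced early may land outside every $X$-position (harmless) or inside one (must be postponed), and the analysis has to be driven by the $X$-positions of $t$, not by the global cardinality of $\Convert(u)$.

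Two further mismatches with the paper's route are worth flagging. First, the tools: Lemmas~\ref{lem:phito_closed-under-contexts} and~\ref{lem:phito_closed-under-substitutions} concern the simulation relation $\phito$ into $\U(\cR)$ and say nothing about cardinalities of $\Convert$ or about rearranging $\SR(\cR)$-derivations; the lemma you actually need --- and do not use --- is Lemma~\ref{lemma:SR-properties}, which traces a derivation into an instance of a linear pattern back to a matching of the source together with per-variable derivations $x\sigma \mathrel{\to^*_{\SR(\cR)}} x\theta$, and this is precisely what lets one decompose the derivation variable-by-variable and choose $\delta$ (together with Lemma~\ref{lem:Convert-properties}~(d),~(e) for the singleton claims). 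Second, the induction measure: the paper inducts on the lexicographic product of $d$ and the size of $s$, because the structural descent into arguments of $s$ can leave $d$ unchanged; your induction on $d$ alone does not support that descent, and your factoring move gives no guarantee that every piece strictly decreases $d$ while also re-establishing the hypothesis $|\Convert(\cdot)|=1$ for its starting term.
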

\begin{proofsketch}
Using Theorem~\ref{thm:UWLL-properties},
 Lemma~\ref{lem:Convert-properties}~(a),~(c),~(d), and~(e), and
 Lemma~\ref{lemma:SR-properties}, this lemma can be proved by induction
 on the lexicographic product of $d$ and the size of $s$.
 \qed
\end{proofsketch}
We show the main key lemma on the relationship between
$\to^*_{\SR(\cR)}$ and $\phito$.
\begin{lemma}
 \label{lem:Un-simulation}
 Let $\cR$ be a $\U$-WLL DCTRS, and 
  $s$ and $t$ be terms in $T(\ol{\cF},\cV)$. 
 If\/ $|\Convert(s)|=1$ and $s \mathrel{\to^d_{\SR(\cR)}} t$ ($d\geq
 0$), then $s \mathrel{\phito} t$.
\end{lemma}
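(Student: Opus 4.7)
The plan is to proceed by induction on $d$. The base case $d=0$ is immediate from reflexivity of $\phito$: $s=t$, so every $t' \in \Convert(t) = \Convert(s)$ reaches itself in zero $\U(\cR)$-steps.

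For the inductive step $d \geq 1$, I split the derivation as $s \mathrel{\to^{d-1}_{\SR(\cR)}} u \mathrel{\to_{p, l \to r}} t$, with $u = u'[l\theta]_p$ and $t = u'[r\theta]_p$; after renaming, assume $\Var(u') \cap \Var(l) = \emptyset$. Set $\bar{u} = u'[l]_p$ and $X = \{\, x \in \Var(l) \mid |r|_x > 1 \,\}$. By Theorem~\ref{thm:UWLL-properties}, $\SR(\cR)$ is WLL, so $|l|_x = 1$ for every $x \in \Var(r)$; in particular $X \subseteq \Var(r) \subseteq \Var(l)$ and each $x \in X$ occurs exactly once in $\bar{u}$. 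Lemma~\ref{lemma:SR-properties-3}, applied to $s \mathrel{\to^{d-1}_{\SR(\cR)}} \bar{u}\theta$ with this $X$, then furnishes a substitution $\delta$ and natural numbers $d', d_x$ ($x \in X$) with $d' + \sum_{x \in X} d_x \leq d-1$, $s \mathrel{\to^{d'}_{\SR(\cR)}} \bar{u}\delta$, $|\Convert(x\delta)| = 1$ and $x\delta \mathrel{\to^{d_x}_{\SR(\cR)}} x\theta$ for every $x \in X$ with $x\theta \ne \bot$, and $x\delta = x\theta$ for every other $x \in \Var(\bar{u})$.

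I then chain three $\phito$-links and finish by transitivity. First, $d' < d$ and $|\Convert(s)| = 1$, so the induction hypothesis gives $s \phito \bar{u}\delta = u'\delta[l\delta]_p$. Second, $\delta$ meets the hypothesis of Lemma~\ref{lemma:SR-properties-2}: if $x \in \Var(r)$ with $|r|_x > 1$ and $x\delta \ne \bot$, then $x \in X$ and $x\theta \ne \bot$ (else clause~(d) would force $x\delta = \bot$), hence $|\Convert(x\delta)| = 1$; therefore $l\delta \phito r\delta$, and Lemma~\ref{lem:phito_closed-under-contexts} lifts this to $u'\delta[l\delta]_p \phito u'\delta[r\delta]_p$. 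Third, for each $x \in X$ with $x\theta \ne \bot$ the induction hypothesis applied to $x\delta \mathrel{\to^{d_x}_{\SR(\cR)}} x\theta$ (valid because $d_x < d$ and $|\Convert(x\delta)| = 1$) yields $x\delta \phito x\theta$, while $x\delta = x\theta$ holds in all remaining cases, including for every $x \in \Var(u')$. Lemma~\ref{lem:phito_closed-under-substitutions} applied to $u'[r]_p$ with substitutions $\delta$ and $\theta$ then delivers $u'\delta[r\delta]_p \phito u'\theta[r\theta]_p = t$, completing the chain.

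The substantive obstacle, exactly the phenomenon exhibited by Figure~\ref{fig:idea2}, is the case of non-right-linear rules in $\SR(\cR) \setminus \cRaux$: a single $\SR(\cR)$-step can duplicate a subterm whose $\Convert$-image is ambiguous, so some $t' \in \Convert(t)$ need not be reachable by $\to^*_{\U(\cR)}$ from any element of $\Convert(u)$. Lemma~\ref{lemma:SR-properties-3} is the critical tool that defuses this: it rewinds the derivation to a stage where the troublesome variables carry singleton $\Convert$-images, after which Lemma~\ref{lemma:SR-properties-2} fires at the rewrite position, and the two closure lemmas reassemble the missing pieces on either side.
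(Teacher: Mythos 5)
Your proof is correct and follows exactly the route the paper intends: induction on $d$, peeling off the last rewrite step, rewinding to a substitution with singleton $\Convert$-images via Lemma~\ref{lemma:SR-properties-3}, firing Lemma~\ref{lemma:SR-properties-2} at the redex, and reassembling with Lemmas~\ref{lem:phito_closed-under-contexts} and~\ref{lem:phito_closed-under-substitutions} plus transitivity of $\phito$. The details you supply (the choice of $X$, the use of the WLL property of $\SR(\cR)$ to ensure linearity of $X$ in $u'[l]_p$, and the case analysis on $x\theta=\bot$) are precisely what the paper's proof sketch leaves implicit.
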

\begin{proofsketch}
 Using Theorem~\ref{thm:UWLL-properties},
 Lemmas~\ref{lem:phito_closed-under-contexts},~\ref{lemma:SR-properties-2}
 and~\ref{lemma:SR-properties-3}, this lemma can be proved by induction
 on $d$. 
\qed
\end{proofsketch}

Finally, we show the key result of this paper. 
\begin{theorem} 
 \label{th:relative-soundness}
 $\SR$ is sound for WLL and\/ $\U$-WLL DCTRSs.
\end{theorem}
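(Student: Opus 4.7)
The plan is to reduce soundness of $\SR$ on the class of WLL and $\U$-WLL DCTRSs to a two-step composition: first simulate the $\SR(\cR)$-derivation by a $\U(\cR)$-derivation, using the main key lemma already established (Lemma~\ref{lem:Un-simulation}), and then invoke the known soundness of the simultaneous unraveling $\U$ for WLL DCTRSs (Theorem~\ref{th:U-soundness_for_WLL}). Concretely, I would fix a WLL and $\U$-WLL DCTRS $\cR$ over $\cF$, take arbitrary $t_1 \in T(\cF,\cV)$ and $t_2 \in T(\ol{\cF},\cV)$ with $\guard{\ol{t_1}} \mathrel{\to^*_{\SR(\cR)}} t_2$, and aim to show $t_1 \mathrel{\to^*_\cR} \wh{t_2}$ whenever $\wh{t_2}$ is defined.

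First I would verify the hypothesis of Lemma~\ref{lem:Un-simulation} at the start of the derivation: by Lemma~\ref{lem:Convert-properties}(a), $\Convert(\ol{t_1}) = \{t_1\}$, and by the definition of $\Convert$ on $\guard{\cdot}$-terms, $\Convert(\guard{\ol{t_1}}) = \Convert(\ol{t_1}) = \{t_1\}$, so $|\Convert(\guard{\ol{t_1}})| = 1$. Then I would apply Lemma~\ref{lem:Un-simulation} to $\guard{\ol{t_1}} \mathrel{\to^*_{\SR(\cR)}} t_2$ to conclude $\guard{\ol{t_1}} \mathrel{\phito} t_2$, i.e., for every $t' \in \Convert(t_2)$ there is $s' \in \Convert(\guard{\ol{t_1}}) = \{t_1\}$ with $s' \mathrel{\to^*_{\U(\cR)}} t'$; necessarily $s' = t_1$.

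Next I would specialize this to $t' = \wh{t_2}$. By Lemma~\ref{lem:Convert-properties}(b) we have $\wh{t_2} \in \Convert(t_2)$, so the preceding step yields $t_1 \mathrel{\to^*_{\U(\cR)}} \wh{t_2}$. Since $\wh{t_2} \in T(\cF,\cV)$ by definition of the partial inverse $\wh{\cdot}$, and $\cR$ is assumed WLL, Theorem~\ref{th:U-soundness_for_WLL} gives $t_1 \mathrel{\to^*_\cR} \wh{t_2}$, which is exactly the soundness statement for the initialization/backtranslation pair $(\guard{\,\ol{\cdot}\,},\wh{\cdot})$.

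There is not much of a remaining obstacle here, since all of the real work has already been pushed into Lemma~\ref{lem:Un-simulation} (the $\SR$-to-$\U$ simulation via $\Convert$) and into Theorem~\ref{th:U-soundness_for_WLL} (the $\U$-to-$\cR$ step). The only points worth double-checking in writing it up are that $\guard{\ol{t_1}}$ is a reachable term so $\Convert$ is applicable as intended, that the hypothesis $|\Convert(\guard{\ol{t_1}})|=1$ is indeed the right premise for Lemma~\ref{lem:Un-simulation}, and that WLL-ness of $\cR$ (not merely $\U$-WLL-ness) is genuinely required in the final appeal to Theorem~\ref{th:U-soundness_for_WLL}, which is why the theorem is stated for DCTRSs that are both WLL and $\U$-WLL.
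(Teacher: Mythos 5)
Your proposal is correct and follows essentially the same route as the paper's own proof: verify $|\Convert(\guard{\ol{t_1}})|=1$ via Lemma~\ref{lem:Convert-properties}(a), apply Lemma~\ref{lem:Un-simulation} to obtain $\guard{\ol{t_1}} \mathrel{\phito} t_2$, specialize to $\wh{t_2} \in \Convert(t_2)$ using Lemma~\ref{lem:Convert-properties}(b), and finish with Theorem~\ref{th:U-soundness_for_WLL}. The closing remarks about reachability and the role of WLL-ness match the paper's intent as well.
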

 \begin{proof}
  Let $\cR$ be a WLL and $\U$-WLL DCTRS over a signature $\cF$, $s \in
  T(\cF,\cV)$, and $t \in T(\ol{\cF},\cV)$. 
  Suppose that $\guard{\ol{s}} \mathrel{\to^*_{\SR(\cR)}} t$.  
  It follows from Lemma~\ref{lem:Un-simulation} that $\guard{\ol{s}}
  \mathrel{\phito} t$. 
  It follows from Lemma~\ref{lem:Convert-properties}~(a),~(b) that
  $\Convert(\guard{\ol{s}})=\Convert(\ol{s})=\{s\}$ and $\wh{t} \in
  \Convert(t)$, and hence, by the definition of $\phito$, $s
  \mathrel{\to^*_{\U(\cR)}} \wh{t}$. 
  Since $\U$ is sound for $\cR$ by Theorem~\ref{th:U-soundness_for_WLL},
  it holds that $s \mathrel{\to^*_\cR} \wh{t}$. 
  Therefore, $\SR$ is sound for $\cR$. 
\qed
 \end{proof}

Let us consider the conversion $\T$ in Section~\ref{def:U} again.
As a consequence of Theorems~\ref{thm:from_WLL_to_UWLL}
and~\ref{th:relative-soundness}, we show that the composed
transformation $\SR\circ\T$ of a WLL DCTRS into a WLL TRS is
sound for WLL DCTRSs.
 \begin{theorem}
  \label{thm:correctness_of_T}
  The composed transformation $\SR\circ\T$ is sound and complete for WLL
  DCTRSs.
 \end{theorem}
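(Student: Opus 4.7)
The plan is to treat the composition as a two-stage pipeline, with $\T$ producing a DCTRS to which the results on $\SR$ already apply. Let $\cR$ be a WLL DCTRS over $\cF$, and let $\cF'$ denote the signature of $\T(\cR)$, which extends $\cF$ with the fresh tuple constructors $\Tuple_j$ introduced by $\T$. By the construction of $\T$, the system $\T(\cR)$ is simultaneously WLL and $\U$-WLL, so both Theorem~\ref{th:relative-soundness} (soundness of $\SR$) and Theorem~\ref{thm:completeness_of_SR} (completeness of $\SR$) are in force for the second stage.

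For soundness I would start from $s \in T(\cF,\cV)$ and a reduction $\guard{\ol{s}} \mathrel{\to^*_{\SR(\T(\cR))}} t$ with $\wh{t}$ defined, and proceed in two moves. First, Theorem~\ref{th:relative-soundness} applied to $\T(\cR)$ yields $s \mathrel{\to^*_{\T(\cR)}} \wh{t}$. Second, the observation recorded just before Theorem~\ref{thm:from_WLL_to_UWLL}---that $T(\cF,\cV)$ is closed under $\T(\cR)$-reduction---places $\wh{t}$ in $T(\cF,\cV)$, whereupon Theorem~\ref{thm:from_WLL_to_UWLL} converts this to $s \mathrel{\to^*_\cR} \wh{t}$. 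This is exactly soundness of $\SR\circ\T$ with respect to $(\guard{\ol{\cdot}},\wh{\cdot})$.

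For completeness the argument is symmetric and even shorter: from $s \mathrel{\to^*_\cR} s'$ with $s,s' \in T(\cF,\cV)$, Theorem~\ref{thm:from_WLL_to_UWLL} gives $s \mathrel{\to^*_{\T(\cR)}} s'$, and then Theorem~\ref{thm:completeness_of_SR}, valid for the $\U$-WLL DCTRS $\T(\cR)$, lifts this to $\guard{\ol{s}} \mathrel{\to^*_{\SR(\T(\cR))}} \guard{\ol{s'}}$, which is completeness of $\SR\circ\T$ with respect to $\guard{\ol{\cdot}}$.

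The only point that requires care is the signature mismatch between $\cR$ and $\T(\cR)$: the initialization and backtranslation used for $\SR(\T(\cR))$ naturally live over $\ol{\cF'}$ rather than $\ol{\cF}$, so one must check that the backtranslated witness $\wh{t}$ actually lies in $T(\cF,\cV)$ before invoking Theorem~\ref{thm:from_WLL_to_UWLL}. The closure of $T(\cF,\cV)$ under $\T(\cR)$-reduction supplies this for free, and beyond that bookkeeping the argument is a direct composition of earlier results; I do not expect any further technical obstacle.
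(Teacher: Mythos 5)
Your proof is correct and follows exactly the route the paper intends: the theorem is stated there as a direct consequence of Theorem~\ref{thm:from_WLL_to_UWLL} (equivalence of $\to^*_\cR$ and $\to^*_{\T(\cR)}$ on $T(\cF,\cV)$) combined with Theorem~\ref{th:relative-soundness} for soundness and Theorem~\ref{thm:completeness_of_SR} for completeness, both applicable because $\T(\cR)$ is WLL and $\U$-WLL by construction. Your extra care about the signature of the backtranslated witness, discharged via the closure of $T(\cF,\cV)$ under $\T(\cR)$-reduction, is precisely the bookkeeping the paper leaves implicit.
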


\section{Conclusion}
\label{sec:conclusion}

In this paper, we have shown that every WLL DCTRS can be converted to an
equivalent WLL and $\U$-WLL DCTRS and the {\SRtransformation} is
applicable to $\U$-WLL DCTRSs without any change.
Then, we have proved that the {\SRtransformation} is sound for WLL and
$\U$-WLL DCTRSs.
As a consequence of these results, we have shown that the composition of
the conversion and the {\SRtransformation} is a sound
structure-preserving transformation for WLL DCTRSs.
For computational equivalence to WLL SDCTRSs, we have to show that if
$\cR$ is confluent, then so is $\SR(\cR)$. To prove this claim as
in~\cite{SR06b} is one of our future works.
To expand the applicability of the {\SRtransformation}, we will extend
the {\SRtransformation} to other classes.  

\paragraph{Acknowledgements}

We thank the anonymous reviewers very much for their useful comments to
improve this paper.


\begin{thebibliography}{10}
\providecommand{\bibitemdeclare}[2]{}
\providecommand{\surnamestart}{}
\providecommand{\surnameend}{}
\providecommand{\urlprefix}{Available at }
\providecommand{\url}[1]{\texttt{#1}}
\providecommand{\href}[2]{\texttt{#2}}
\providecommand{\urlalt}[2]{\href{#1}{#2}}
\providecommand{\doi}[1]{doi:\urlalt{http://dx.doi.org/#1}{#1}}
\providecommand{\bibinfo}[2]{#2}

\bibitemdeclare{book}{BN98}
\bibitem{BN98}
\bibinfo{author}{Franz \surnamestart Baader\surnameend} \&
  \bibinfo{author}{Tobias \surnamestart Nipkow\surnameend}
  (\bibinfo{year}{1998}): \emph{\bibinfo{title}{Term Rewriting and All That}}.
\newblock \bibinfo{publisher}{Cambridge University Press},
  \doi{10.1145/505863.505888}.

\bibitemdeclare{inproceedings}{DLMMU04}
\bibitem{DLMMU04}
\bibinfo{author}{Francisco \surnamestart Dur{\'a}n\surnameend},
  \bibinfo{author}{Salvador \surnamestart Lucas\surnameend},
  \bibinfo{author}{Jos{\'e} \surnamestart Meseguer\surnameend},
  \bibinfo{author}{Claude \surnamestart March{\'e}\surnameend} \&
  \bibinfo{author}{Xavier \surnamestart Urbain\surnameend}
  (\bibinfo{year}{2004}): \emph{\bibinfo{title}{Proving termination of
  membership equational programs}}.
\newblock In \bibinfo{editor}{Nevin \surnamestart Heintze\surnameend} \&
  \bibinfo{editor}{Peter \surnamestart Sestoft\surnameend}, editors: {\sl
  \bibinfo{booktitle}{Proceedings of the 2004 ACM SIGPLAN Workshop on Partial
  Evaluation and Semantics-based Program Manipulation}},
  \bibinfo{publisher}{ACM}, pp. \bibinfo{pages}{147--158},
  \doi{10.1145/1014007.1014022}.

\bibitemdeclare{article}{FG03}
\bibitem{FG03}
\bibinfo{author}{Guillaume \surnamestart Feuillade\surnameend} \&
  \bibinfo{author}{Thomas \surnamestart Genet\surnameend}
  (\bibinfo{year}{2003}): \emph{\bibinfo{title}{Reachability in Conditional
  Term Rewriting Systems}}.
\newblock {\sl \bibinfo{journal}{Electronic Notes in Theoretical Computer
  Science}} \bibinfo{volume}{86}(\bibinfo{number}{1}), pp.
  \bibinfo{pages}{133--146}, \doi{10.1016/S1571-0661(04)80658-3}.

\bibitemdeclare{inproceedings}{GG08}
\bibitem{GG08}
\bibinfo{author}{Karl \surnamestart Gmeiner\surnameend} \&
  \bibinfo{author}{Bernhard \surnamestart Gramlich\surnameend}
  (\bibinfo{year}{2009}): \emph{\bibinfo{title}{Transformations of Conditional
  Rewrite Systems Revisited}}.
\newblock In \bibinfo{editor}{Andrea \surnamestart Corradini\surnameend} \&
  \bibinfo{editor}{Ugo \surnamestart Montanari\surnameend}, editors: {\sl
  \bibinfo{booktitle}{Proceedings of the 19th International Workshop on Recent
  Trends in Algebraic Development Techniques}}, {\sl \bibinfo{series}{Lecture
  Notes in Computer Science}} \bibinfo{volume}{5486},
  \bibinfo{publisher}{Springer}, pp. \bibinfo{pages}{166--186},
  \doi{10.1007/978-3-642-03429-9\_12}.

\bibitemdeclare{inproceedings}{GGS10}
\bibitem{GGS10}
\bibinfo{author}{Karl \surnamestart Gmeiner\surnameend},
  \bibinfo{author}{Bernhard \surnamestart Gramlich\surnameend} \&
  \bibinfo{author}{Felix \surnamestart Schernhammer\surnameend}
  (\bibinfo{year}{2010}): \emph{\bibinfo{title}{On (Un)Soundness of
  Unravelings}}.
\newblock In \bibinfo{editor}{Christopher \surnamestart Lynch\surnameend},
  editor: {\sl \bibinfo{booktitle}{Proceedings of the 21st International
  Conference on Rewriting Techniques and Applications}}, {\sl
  \bibinfo{series}{Leibniz International Proceedings in
  Informatics}}~\bibinfo{volume}{6}, \bibinfo{publisher}{Schloss Dagstuhl --
  Leibniz-Zentrum f\"ur Informatik}, pp. \bibinfo{pages}{119--134},
  \doi{10.4230/LIPIcs.RTA.2010.119}.

\bibitemdeclare{inproceedings}{GGS12}
\bibitem{GGS12}
\bibinfo{author}{Karl \surnamestart Gmeiner\surnameend},
  \bibinfo{author}{Bernhard \surnamestart Gramlich\surnameend} \&
  \bibinfo{author}{Felix \surnamestart Schernhammer\surnameend}
  (\bibinfo{year}{2012}): \emph{\bibinfo{title}{On Soundness Conditions for
  Unraveling Deterministic Conditional Rewrite Systems}}.
\newblock In \bibinfo{editor}{Ashish \surnamestart Tiwari\surnameend}, editor:
  {\sl \bibinfo{booktitle}{Proceedings of the 23rd International Conference on
  Rewriting Techniques and Applications}}, {\sl \bibinfo{series}{Leibniz
  International Proceedings in Informatics}}~\bibinfo{volume}{15},
  \bibinfo{publisher}{Schloss Dagstuhl -- Leibniz-Zentrum f\"ur Informatik},
  pp. \bibinfo{pages}{193--208}, \doi{10.4230/LIPIcs.RTA.2012.193}.

\bibitemdeclare{inproceedings}{GN14wpte}
\bibitem{GN14wpte}
\bibinfo{author}{Karl \surnamestart Gmeiner\surnameend} \&
  \bibinfo{author}{Naoki \surnamestart Nishida\surnameend}
  (\bibinfo{year}{2014}): \emph{\bibinfo{title}{Notes on Structure-Preserving
  Transformations of Conditional Term Rewrite Systems}}.
\newblock In \bibinfo{editor}{Manfred \surnamestart
  Schmidt{-}Schau{\ss}\surnameend}, \bibinfo{editor}{Masahiko \surnamestart
  Sakai\surnameend}, \bibinfo{editor}{David \surnamestart Sabel\surnameend} \&
  \bibinfo{editor}{Yuki \surnamestart Chiba\surnameend}, editors: {\sl
  \bibinfo{booktitle}{Proceedings of the first International Workshop on
  Rewriting Techniques for Program Transformations and Evaluation}}, {\sl
  \bibinfo{series}{OpenAccess Series in Informatics}}~\bibinfo{volume}{40},
  \bibinfo{publisher}{Schloss Dagstuhl -- Leibniz-Zentrum f\"ur Informatik},
  pp. \bibinfo{pages}{3--14}, \doi{10.4230/OASIcs.WPTE.2014.3}.

\bibitemdeclare{article}{LMM05}
\bibitem{LMM05}
\bibinfo{author}{Salvador \surnamestart Lucas\surnameend},
  \bibinfo{author}{Claude \surnamestart March{\'e}\surnameend} \&
  \bibinfo{author}{Jos{\'e} \surnamestart Meseguer\surnameend}
  (\bibinfo{year}{2005}): \emph{\bibinfo{title}{Operational termination of
  conditional term rewriting systems}}.
\newblock {\sl \bibinfo{journal}{Information Processing Letters}}
  \bibinfo{volume}{95}(\bibinfo{number}{4}), pp. \bibinfo{pages}{446--453},
  \doi{10.1016/j.ipl.2005.05.002}.

\bibitemdeclare{inproceedings}{Mar96}
\bibitem{Mar96}
\bibinfo{author}{Massimo \surnamestart Marchiori\surnameend}
  (\bibinfo{year}{1996}): \emph{\bibinfo{title}{Unravelings and
  Ultra-properties}}.
\newblock In \bibinfo{editor}{Michael \surnamestart Hanus\surnameend} \&
  \bibinfo{editor}{Mario \surnamestart Rodr\'{\i}guez-Artalejo\surnameend},
  editors: {\sl \bibinfo{booktitle}{Proceedings of the 5th International
  Conference on Algebraic and Logic Programming}}, {\sl
  \bibinfo{series}{Lecture Notes in Computer Science}} \bibinfo{volume}{1139},
  \bibinfo{publisher}{Springer}, pp. \bibinfo{pages}{107--121},
  \doi{10.1007/3-540-61735-3\_7}.

\bibitemdeclare{techreport}{Mar97}
\bibitem{Mar97}
\bibinfo{author}{Massimo \surnamestart Marchiori\surnameend}
  (\bibinfo{year}{1997}): \emph{\bibinfo{title}{On Deterministic Conditional
  Rewriting}}.
\newblock \bibinfo{type}{Computation Structures Group, Memo}
  \bibinfo{number}{405}, \bibinfo{institution}{MIT Laboratory for Computer
  Science}.

\bibitemdeclare{article}{NSS12lmcs}
\bibitem{NSS12lmcs}
\bibinfo{author}{Naoki \surnamestart Nishida\surnameend},
  \bibinfo{author}{Masahiko \surnamestart Sakai\surnameend} \&
  \bibinfo{author}{Toshiki \surnamestart Sakabe\surnameend}
  (\bibinfo{year}{2012}): \emph{\bibinfo{title}{Soundness of Unravelings for
  Conditional Term Rewriting Systems via Ultra-Properties Related to
  Linearity}}.
\newblock {\sl \bibinfo{journal}{Logical Methods in Computer Science}}
  \bibinfo{volume}{8}(\bibinfo{number}{3}), pp. \bibinfo{pages}{1--49},
  \doi{10.2168/LMCS-8(3:4)2012}.

\bibitemdeclare{inproceedings}{NYG14wpte}
\bibitem{NYG14wpte}
\bibinfo{author}{Naoki \surnamestart Nishida\surnameend},
  \bibinfo{author}{Makishi \surnamestart Yanagisawa\surnameend} \&
  \bibinfo{author}{Karl \surnamestart Gmeiner\surnameend}
  (\bibinfo{year}{2014}): \emph{\bibinfo{title}{On Proving Soundness of the
  Computationally Equivalent Transformation for Normal Conditional Term
  Rewriting Systems by Using Unravelings}}.
\newblock In \bibinfo{editor}{Manfred \surnamestart
  Schmidt-Schau{\ss}\surnameend}, \bibinfo{editor}{Masahiko \surnamestart
  Sakai\surnameend}, \bibinfo{editor}{David \surnamestart Sabel\surnameend} \&
  \bibinfo{editor}{Yuki \surnamestart Chiba\surnameend}, editors: {\sl
  \bibinfo{booktitle}{Proceedings of the First International Workshop on
  Rewriting Techniques for Program Transformations and Evaluation}}, {\sl
  \bibinfo{series}{OpenAccess Series in Informatics}}~\bibinfo{volume}{40},
  \bibinfo{publisher}{Schloss Dagstuhl -- Leibniz-Zentrum f\"ur Informatik},
  pp. \bibinfo{pages}{39--50}, \doi{10.4230/OASIcs.WPTE.2014.39}.

\bibitemdeclare{article}{Ohl01}
\bibitem{Ohl01}
\bibinfo{author}{Enno \surnamestart Ohlebusch\surnameend}
  (\bibinfo{year}{2001}): \emph{\bibinfo{title}{Termination of Logic Programs:
  Transformational Methods Revisited}}.
\newblock {\sl \bibinfo{journal}{Applicable Algebra in Engineering,
  Communication and Computing}} \bibinfo{volume}{12}(\bibinfo{number}{1/2}),
  pp. \bibinfo{pages}{73--116}, \doi{10.1007/s002000100064}.

\bibitemdeclare{book}{Ohl02}
\bibitem{Ohl02}
\bibinfo{author}{Enno \surnamestart Ohlebusch\surnameend}
  (\bibinfo{year}{2002}): \emph{\bibinfo{title}{Advanced Topics in Term
  Rewriting}}.
\newblock \bibinfo{publisher}{Springer}, \doi{10.1007/978-1-4757-3661-8}.

\bibitemdeclare{inproceedings}{SR06}
\bibitem{SR06}
\bibinfo{author}{Traian-Florin \surnamestart {\c S}erb{\u a}nu{\c t}{\u
  a}\surnameend} \& \bibinfo{author}{Grigore \surnamestart {Ro{\c
  s}u}\surnameend} (\bibinfo{year}{2006}):
  \emph{\bibinfo{title}{Computationally Equivalent Elimination of Conditions}}.
\newblock In \bibinfo{editor}{Frank \surnamestart Pfenning\surnameend}, editor:
  {\sl \bibinfo{booktitle}{Proceedings of the 17th International Conference on
  Rewriting Techniques and Applications}}, {\sl \bibinfo{series}{Lecture Notes
  in Computer Science}} \bibinfo{volume}{4098}, \bibinfo{publisher}{Springer},
  pp. \bibinfo{pages}{19--34}, \doi{10.1007/11805618\_3}.

\bibitemdeclare{techreport}{SR06b}
\bibitem{SR06b}
\bibinfo{author}{Traian-Florin \surnamestart {\c S}erb{\u a}nu{\c t}{\u
  a}\surnameend} \& \bibinfo{author}{Grigore \surnamestart {Ro{\c
  s}u}\surnameend} (\bibinfo{year}{2006}):
  \emph{\bibinfo{title}{Computationally Equivalent Elimination of Conditions}}.
\newblock \bibinfo{type}{Technical Report}
  \bibinfo{number}{UIUCDCS-R-2006-2693}, \bibinfo{institution}{Department of
  Computer Science, University of Illinois at Urbana-Champaign}.

\bibitemdeclare{inproceedings}{SMI95}
\bibitem{SMI95}
\bibinfo{author}{Taro \surnamestart Suzuki\surnameend}, \bibinfo{author}{Aart
  \surnamestart Middeldorp\surnameend} \& \bibinfo{author}{Tetsuo \surnamestart
  Ida\surnameend} (\bibinfo{year}{1995}):
  \emph{\bibinfo{title}{Level-Confluence of Conditional Rewrite Systems with
  Extra Variables in Right-Hand Sides}}.
\newblock In \bibinfo{editor}{Jieh \surnamestart Hsiang\surnameend}, editor:
  {\sl \bibinfo{booktitle}{Proceedings of the 6th International Conference on
  Rewriting Techniques and Applications}}, {\sl \bibinfo{series}{Lecture Notes
  in Computer Science}} \bibinfo{volume}{914}, \bibinfo{publisher}{Springer},
  pp. \bibinfo{pages}{179--193}, \doi{10.1007/3-540-59200-8\_56}.

\bibitemdeclare{article}{Vir99}
\bibitem{Vir99}
\bibinfo{author}{Patrick \surnamestart Viry\surnameend} (\bibinfo{year}{1999}):
  \emph{\bibinfo{title}{Elimination of Conditions}}.
\newblock {\sl \bibinfo{journal}{Journal of Symbolic Computation}}
  \bibinfo{volume}{28}(\bibinfo{number}{3}), pp. \bibinfo{pages}{381--401},
  \doi{10.1006/jsco.1999.0288}.

\end{thebibliography}




\end{document}